\newcommand{\cov}{\mathrm{cov}}
\newtheorem{proposition}{Proposition}
\newtheorem{lem}{Lemma}
\title{The Famous American Economist H. Markowitz and Mathematical Overview of his Portfolio Selection Theory
%%%% Cite as
%%%% Update your official citation here when published 
%\thanks{\textit{\underline{Citation}}: 
%\textbf{Authors. Title. Pages.... DOI:000000/11111.}} 
}
\author{
Ignas Gasparavičius, Andrius Grigutis \\
Institute of Mathematics\\
Faculty of Mathematics and Informatics \\
Vilnius University\\
Naugarduko g. 24, LT-03225, Vilnius, Lithuania\\
\\
\texttt{\{ignas.gasparavicius@mif.stud.vu.lt, andrius.grigutis@mif.vu.lt\}} \\
}
\begin{document}
\maketitle

\begin{abstract}
This survey article is dedicated to the life of the famous American economist H. Markowitz (1927--2023). We do revisit the main statements of the portfolio selection theory in terms of mathematical completeness including all the necessary auxiliary details.
\end{abstract}

% keywords can be removed
\keywords{portfolio selection theory \and
efficient frontier \and calculus \and matrix algebra \and
optimization}

\section{Introduction}

On June 22, 2023, the famous American economist H. Mar\-ko\-witz (aged 95) passed away. In this survey article, we revisit the main statements of the portfolio selection theory (also known as modern portfolio theory, mean-variance analysis, or just Markowitz theory) left by this prominent man, see \cite{P_Selection}, \cite{Efficient}, \cite{Merton}, \cite{CONSTANTINIDES}, \cite{50_1}, \cite{Math} and many other sources. We do not focus on the facts from Markowitz's life as this can be easily found in various other sources, see, for example, \cite{Part_I}, \cite{Part_II} or \cite{50}, but rather seek to reveal the mathematical completeness when formulating and proving statements in portfolio selection theory. We also briefly touch on the influence that Markowitz's theory made on other fields such as the rise of the capital asset pricing model, utility theory, arbitrage pricing theory, post-modernistic theory, etc.; see the end of the next section. 

To read the presented text, it is sufficient to be familiar with fundamental mathematical courses that are usually taught during the first two years in many universities of na\-tu\-ral sciences. We ask readers to be familiar with the Lagrange multiplier method, quadratic forms, differentials, linear algebra, probability theory, and mathematical statistics.    

Let $n\in\mathbb{N}$ and denote
\begin{align*}
\mathbb{W}:=(w_1,\,w_2,\,\ldots,\,w_n)\in\mathbb{R}^n, \, \sum_{i=1}^{n}w_i=1.
\end{align*}

We call the vector $\mathbb{W}\in\mathbb{R}^n$ {\bf portfolio}, whose every component $w_i,\,i=1,\,2,\,\ldots,\,n$ shall be understood as the proportion (weight) of asset $i$ in total "pool" of investments. In other words, if $A_1,\,A_2,\,\ldots,\,A_n$ denote certain financial firms, the portfolio informally can be represented as $"w_1A_1+w_2A_2+\ldots+w_nA_n"$, where $w_i$ denotes the proportion of capital allocated to the financial firm $A_i$, $i=1,\,2,\,\ldots,\,n$. If $w_i<0$ for some $i=1,\,2,\,\ldots,\,n$, we say the asset $i$ of the financial firm $A_i$ is in a "short position", which means that it is borrowed, sold, and the attained income invested in some other assets out of those represented by financial firms $A_1,\,A_2,\,\ldots,\,A_n$. Let ${\pmb R}:=(r_1,\,r_2,\,\ldots,\,r_n)$ be the collection of real-valued random variables defined on the same probability space. We suppose that the random vector ${\pmb R}$ represents the random returns "paid" by the financial firms $A_1,\,A_2,\,\ldots,\,A_n$. Of course, we quote the word "paid" here due to the negative return cases. Let ${\pmb \mu}:=(\mu_1,\,\mu_2,\,\ldots,\,\mu_n)$ and $\sigma^2_1,\,\sigma^2_2,\,\ldots,\,\sigma^2_n$ denote the expected values and variances of $r_1,\,r_2,\,\ldots,\,r_n$ correspondingly. In addition, let $\cov(r_i,\,r_j)$ denote the covariance between the random variables $r_i$ and $r_j$, for all $i,\,j=1,\,2,\,\ldots,\,n$ and
\begin{align*}
\Sigma:=
\begin{pmatrix}
\sigma^2_{1} & \cov(r_{1},\,r_{2})& \ldots & \cov(r_{1},\,r_{n})\\
\cov(r_{1},\,r_{2})&\sigma^2_{2}&\ldots&\cov(r_{2},\,r_{n})\\
\vdots & \vdots & \ddots & \vdots\\
\cov(r_{1},\,r_{n}) & \cov(r_{2},\,r_{n})& \ldots & \sigma^2_{n}
\end{pmatrix}
\end{align*}
be the covariance matrix. In practice, the collections of numerical characteristics ${\pmb \mu}$ and $\Sigma$ can be obtained variously: according to some historical observations, estimates, predictions, etc, see \cite{covar}, \cite{covar_1}, \cite{covar_2}. We next denote the random variable $P:=w_1r_1+w_2r_2+\ldots+w_nr_n$ whose expected value and variance are:
\begin{align*}
&\mu_p:={\pmb \mu}\mathbb{W}^T,\\
&\sigma^2_P:=\mathbb{W}\Sigma\mathbb{W}^T=\sum_{i=1}^{n}w^2_i\sigma^2_i+2\sum_{1\leqslant i<j\leqslant n}w_iw_j\cov(r_i,\,r_j).
\end{align*}

The notation $P$ here is of course associated with the portfolio as it represents the random return of sum of weighted random returns $r_1,\,r_2,\,\ldots,\,r_n$ of single investments.

The famous portfolio selection theory asks what weights $\mathbb{W}\in\mathbb{R}^n$ shall be chosen so that $\sigma_P^2$ and $\mu_P$ satisfy certain conditions (see the next section for more precise formulations). The core assumption of portfolio selection theory is that investors behave rationally: they prefer the lower-risk (variance) portfolios under the same level of the expected return. Technically, the numerical characteristics $\sigma_P^2$ and $\mu_P$ are nothing but functions of $n$ variables $w_1,\,w_2,\,\ldots,\,w_n$ under constraint $\mathbbm{1}\mathbb{W}^T=1$, where $\mathbbm{1}:=(1,\,1,\ldots,\,1)_{1\times n}$.

In the next section, we reformulate the core statements of the Markowitz portfolio selection theory. As we will see, the portfolio selection is based on the inverse $\Sigma^{-1}$ of the covariance matrix $\Sigma$. Because of that, we need the following two little auxiliary statements on the existence of $\Sigma^{-1}$ and some of its properties. 

\begin{lem}\label{lem:nonsingular}
Let ${\pmb 0}:=(0,\,0,\ldots,\,0)_{1\times n}$. The covariance matrix $\Sigma$ is singular iff $\mathbb{P}(w_1r_1+w_2r_2+\ldots+w_nr_n=conts.)=1$ for some $\mathbb{W}\in\mathbb{R}^n\setminus\{{\pmb 0}\}$.
\end{lem}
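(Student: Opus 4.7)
The whole lemma will hinge on the identity already recorded in the excerpt, namely $\sigma_P^2 = \mathrm{Var}(w_1 r_1 + \ldots + w_n r_n) = \mathbb{W}\Sigma\mathbb{W}^T$, together with the elementary probabilistic fact that a random variable has zero variance iff it is almost surely constant. Using these two ingredients, the statement reduces to the purely linear-algebraic claim that the symmetric positive semidefinite matrix $\Sigma$ is singular iff there exists $\mathbb{W}\neq{\pmb 0}$ with $\mathbb{W}\Sigma\mathbb{W}^T=0$.

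First I would record that $\Sigma$ is positive semidefinite, which is immediate since $\mathbb{W}\Sigma\mathbb{W}^T=\mathrm{Var}(\sum_i w_i r_i)\geqslant 0$ for every $\mathbb{W}\in\mathbb{R}^n$. For the direction ($\Leftarrow$), suppose there is a nonzero $\mathbb{W}$ for which $\sum_i w_i r_i$ is a.s.\ constant. Then $\mathrm{Var}(\sum_i w_i r_i)=0$, so $\mathbb{W}\Sigma\mathbb{W}^T=0$; writing $\Sigma=B^TB$ via the spectral theorem gives $\|B\mathbb{W}^T\|^2=0$, hence $B\mathbb{W}^T={\pmb 0}^T$ and therefore $\Sigma\mathbb{W}^T=B^TB\mathbb{W}^T={\pmb 0}^T$, so $\Sigma$ is singular.

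For the direction ($\Rightarrow$), if $\Sigma$ is singular then $\det\Sigma=0$ and there exists a nonzero $\mathbb{W}\in\mathbb{R}^n$ with $\Sigma\mathbb{W}^T={\pmb 0}^T$. Multiplying on the left by $\mathbb{W}$ yields $\mathbb{W}\Sigma\mathbb{W}^T=0$, that is, $\mathrm{Var}(\sum_i w_i r_i)=0$, which forces $\sum_i w_i r_i$ to be almost surely equal to its expected value ${\pmb\mu}\mathbb{W}^T$, a constant.

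The only mildly subtle step is the equivalence $\mathbb{W}\Sigma\mathbb{W}^T=0\Longleftrightarrow\Sigma\mathbb{W}^T={\pmb 0}^T$, which was used implicitly to pass from the vanishing of the quadratic form to membership in the kernel of the matrix itself. One implication is trivial; the other relies on positive semidefiniteness, e.g.\ through the factorisation $\Sigma=B^TB$ indicated above, or equivalently via Cauchy--Schwarz applied to the semi-inner product $\langle u,v\rangle_\Sigma:=u\Sigma v^T$. This is the one place where more than bare matrix arithmetic enters, and it is where I would pay the most attention in a careful write-up; the remaining manipulations are just bookkeeping between the probabilistic and algebraic formulations.
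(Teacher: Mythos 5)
Your proof is correct and follows essentially the same route as the paper: identify $\mathrm{Var}\bigl(\sum_i w_i r_i\bigr)$ with the quadratic form $\mathbb{W}\Sigma\mathbb{W}^T$, use the fact that zero variance means almost-sure constancy, and reduce everything to the equivalence $\mathbb{W}\Sigma\mathbb{W}^T=0\Leftrightarrow\Sigma\mathbb{W}^T={\pmb 0}^T$ for the positive semidefinite matrix $\Sigma$. In fact you are more careful than the paper at the one genuinely subtle point: the paper passes from $\mathbb{W}\Sigma\mathbb{W}^T=0=\mathbb{W}{\pmb 0}^T$ to $\Sigma\mathbb{W}^T={\pmb 0}^T$ as if one could cancel $\mathbb{W}$, which is not a valid inference on its own, whereas your factorisation $\Sigma=B^TB$ (or the Cauchy--Schwarz argument for the semi-inner product) supplies exactly the missing justification. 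Keep that step explicit in the final write-up.
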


\begin{proof}
{\textsc I}f $\Sigma$ is singular then 
\begin{align}\label{singular}
\Sigma {\mathbb{W}}^T={\pmb 0}^T
\end{align}
for some $\mathbb{W}\in\mathbb{R}^n\setminus\{{\pmb 0}\}$. Consequently, eq. \eqref{singular} implies 
\begin{align*}
\sigma_P^2=\mathbb{W}\Sigma {\mathbb{W}}^T=\mathbb{W}{\pmb 0}^T=0,
\end{align*}
and that means $\mathbb{P}(w_1r_1+w_2r_2+\ldots+w_nr_n=conts.)=1$ since the variance of the random variable $w_1r_1+w_2r_2+\ldots+w_nr_n$ is zero. 

    Conversely, $\mathbb{P}(w_1r_1+w_2r_2+\ldots+w_nr_n=conts.)=1$ implies 
\begin{align*}
\sigma^2_P=\mathbb{W}\Sigma\mathbb{W}^T=0=\mathbb{W}{\pmb 0}^T\quad \Rightarrow \quad \Sigma\mathbb{W}^T={\pmb 0}^T,
\end{align*}
which implies the singularity of $\Sigma$.
\end{proof}

\bigskip

Notice that the condition $\mathbb{P}(w_1r_1+w_2r_2+\ldots+w_nr_n=conts.)=1$ in Lemma \ref{lem:nonsingular} is equivalent to the condition that all the realizations $(x_1,\,x_2,\,\ldots,\,x_n)\in\mathbb{R}^n$ of the random vector ${\pmb R}$ lie on the hyperplane $w_1x_1+w_2x_2+\ldots+w_nx_n=conts.$ almost surely. See \cite{Cov} as a related topic of mean-variance analysis under the singular covariance matrix.

\begin{lem}\label{lem:symmetric_matrix}
Let $A$ be the symmetric and positively defined matrix such that its inverse $A^{-1}$ exists. Then $A^{-1}$ is also symmetric and positively defined. 
\end{lem}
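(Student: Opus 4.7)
The plan is to establish the two properties of $A^{-1}$ separately, starting with symmetry (which is needed to express the quadratic form of $A^{-1}$ nicely in terms of the quadratic form of $A$).

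For the symmetry, I would start from the defining identity $A A^{-1} = I$ and take the transpose of both sides. This yields $(A^{-1})^T A^T = I$, and since $A$ is symmetric by hypothesis, $A^T = A$. Hence $(A^{-1})^T A = I$, and multiplying on the right by $A^{-1}$ gives $(A^{-1})^T = A^{-1}$, which is exactly the symmetry of $A^{-1}$. This step is entirely routine.

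For the positive definiteness, I need to show that $x A^{-1} x^T > 0$ for every nonzero row vector $x \in \mathbb{R}^n$ (using the row-vector convention of the paper). The natural trick is the substitution $y^T := A^{-1} x^T$, equivalently $x^T = A y^T$. Because $A^{-1}$ is nonsingular and $x \neq {\pmb 0}$, we get $y \neq {\pmb 0}$. Then
\begin{align*}
x A^{-1} x^T = (A y^T)^T A^{-1} (A y^T) = y A^T A^{-1} A y^T = y A^T y^T = y A y^T,
\end{align*}
where the last equality uses the symmetry of $A$. By the positive definiteness of $A$ and the fact that $y \neq {\pmb 0}$, this quantity is strictly positive, completing the argument.

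The proof has no real obstacle; the only subtlety worth flagging is choosing the substitution $y^T = A^{-1} x^T$ (rather than $y^T = Ax^T$), so that the inverse $A^{-1}$ is eliminated and the positivity assumption on $A$ itself can be invoked directly. Everything else is mechanical transposition and cancellation.
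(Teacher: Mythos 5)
Your proof is correct and follows essentially the same route as the paper: transpose the identity $AA^{-1}=I$ and use $A^T=A$ for symmetry, then substitute the test vector through $A$ (your $x^T = Ay^T$ is the paper's $\mathbb{W}=\tilde{\mathbb{W}}A^{-1}$ with the names swapped) to reduce the quadratic form of $A^{-1}$ to that of $A$. Nothing to add.
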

\begin{proof}
{\textsc L}et $I$ be the identity matrix. Then $AA^{-1}=A^{-1}A=I$ and consequently
\begin{align*}
\left(AA^{-1}\right)^T=I^T\Rightarrow
\left(A^{-1}\right)^TA^T=I\Rightarrow
\left(A^{-1}\right)^TA=I\Rightarrow
\left(A^{-1}\right)^T=A^{-1}.
\end{align*}
If $A$ is positively defined matrix, then $\mathbb{W}A\mathbb{W}^T>0$ with any $\mathbb{W}\in\mathbb{R}^n\setminus\{{\pmb 0}\}$. If $\mathbb{W}=\tilde{\mathbb{W}}A^{-1}$ then
\begin{align*}
\mathbb{W}A\mathbb{W}^T=\tilde{\mathbb{W}}A^{-1}AA^{-1}\tilde{\mathbb{W}}^T=\tilde{\mathbb{W}}A^{-1}\tilde{\mathbb{W}}^T>0
\end{align*}
with any $\tilde{\mathbb{W}}=\mathbb{W}A\in\mathbb{R}^n\setminus\{{\pmb 0}\}$.
\end{proof}

Let us point to the reference \cite{Strang_2023} as a great source on linear algebra.

\section{Propositions and their implications}\label{sec:Propop}

As stated, in this section we formulate the essential statements of portfolio selection theory. These statements allow us to choose portfolios with the lowest variance, and the highest Sharpe ratio, draw the efficient frontier, and illustrate the fund separation theorem.

\begin{proposition}[The minimal variance portfolio]\label{prop:var}
Let $P={\pmb R}\mathbb{W}^T$ be the random variable, where ${\pmb R}$ denotes the random vector whose covariance matrix $\Sigma$ is non-singular, and $\mathbb{W}\in\mathbb{R}^n$. The variance $\sigma_P^2=\mathbb{W}\Sigma\mathbb{W}^T$, under constraint $\mathbbm{1}\mathbb{W}^T=1$, is minimal when
\begin{align}\label{min_var_w}
\mathbb{W}^T=\frac{\Sigma^{-1}\mathbbm{1}^T}{\mathbbm{1}\Sigma^{-1}\mathbbm{1}^T}.
\end{align}

Moreover,
\begin{align*}
\min\sigma_{P}^2=\frac{1}{\mathbbm{1}\Sigma^{-1}\mathbbm{1}^T}.
\end{align*}
\end{proposition}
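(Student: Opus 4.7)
The plan is to apply the Lagrange multiplier method to the equality-constrained quadratic program of minimizing the quadratic form $\mathbb{W}\Sigma\mathbb{W}^T$ subject to the single linear constraint $\mathbbm{1}\mathbb{W}^T=1$. Since $\Sigma$ is symmetric, differentiating the Lagrangian $L(\mathbb{W},\lambda)=\mathbb{W}\Sigma\mathbb{W}^T-\lambda(\mathbbm{1}\mathbb{W}^T-1)$ with respect to $\mathbb{W}$ and setting the gradient to zero will give the stationarity condition $2\Sigma\mathbb{W}^T=\lambda\mathbbm{1}^T$. Using the hypothesis that $\Sigma$ is non-singular, this rearranges to $\mathbb{W}^T=(\lambda/2)\Sigma^{-1}\mathbbm{1}^T$, leaving $\lambda$ to be fixed by the constraint.

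Next I would substitute $\mathbb{W}^T=(\lambda/2)\Sigma^{-1}\mathbbm{1}^T$ into $\mathbbm{1}\mathbb{W}^T=1$, obtaining $(\lambda/2)\,\mathbbm{1}\Sigma^{-1}\mathbbm{1}^T=1$, so that $\lambda/2=1/(\mathbbm{1}\Sigma^{-1}\mathbbm{1}^T)$ and consequently the announced formula \eqref{min_var_w} follows. For the minimum value I would exploit the stationarity condition directly: since $\Sigma\mathbb{W}^T=(\lambda/2)\mathbbm{1}^T$, we get
\begin{align*}
\mathbb{W}\Sigma\mathbb{W}^T=\mathbb{W}\bigl((\lambda/2)\mathbbm{1}^T\bigr)=(\lambda/2)\,\mathbbm{1}\mathbb{W}^T=\lambda/2=\frac{1}{\mathbbm{1}\Sigma^{-1}\mathbbm{1}^T},
\end{align*}
which avoids redoing the multiplication $\Sigma\Sigma^{-1}=I$ by hand.

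Two tidy-up steps remain. First, the scalar $\mathbbm{1}\Sigma^{-1}\mathbbm{1}^T$ must be nonzero for the formulas to make sense. Here I would invoke Lemma \ref{lem:nonsingular} and Lemma \ref{lem:symmetric_matrix}: the non-singular covariance matrix $\Sigma$ is symmetric and positive definite, so by Lemma \ref{lem:symmetric_matrix} the same holds for $\Sigma^{-1}$, and therefore $\mathbbm{1}\Sigma^{-1}\mathbbm{1}^T>0$ because $\mathbbm{1}\ne\pmb{0}$. Second, the Lagrange method only locates critical points, so I would argue that the critical point is actually the global minimum: since $\Sigma$ is positive definite, the objective $\mathbb{W}\Sigma\mathbb{W}^T$ is strictly convex in $\mathbb{W}$, and the feasible set $\{\mathbb{W}:\mathbbm{1}\mathbb{W}^T=1\}$ is an affine subspace, so any stationary point is the unique global minimizer.

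The main obstacle, such as it is, is not computational but conceptual: one must be careful to upgrade the stated hypothesis that $\Sigma$ is non-singular to the stronger fact that $\Sigma$ is positive definite. This is where Lemma \ref{lem:nonsingular} plays its role, since a covariance matrix is always positive semi-definite and becomes strictly positive definite precisely when no non-trivial linear combination of the returns is almost surely constant, that is, precisely when $\Sigma$ is non-singular. Once this is in place, both the well-definedness of \eqref{min_var_w} and the global minimality follow cleanly.
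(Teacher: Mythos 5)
Your proposal is correct and follows essentially the same route as the paper's proof: the Lagrangian $L(\mathbb{W},\lambda)=\mathbb{W}\Sigma\mathbb{W}^T-\lambda(\mathbbm{1}\mathbb{W}^T-1)$, the stationarity condition $2\Sigma\mathbb{W}^T=\lambda\mathbbm{1}^T$, elimination of $\lambda$ via the constraint, and positive definiteness of $\Sigma$ to certify the minimum. Your two refinements---reading off $\min\sigma_P^2=\lambda/2$ directly from the stationarity condition instead of substituting back, and making explicit both that $\mathbbm{1}\Sigma^{-1}\mathbbm{1}^T>0$ and that strict convexity on the affine feasible set upgrades the critical point to the unique \emph{global} minimizer---are small but genuine improvements in completeness over the paper's local second-differential check.
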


The next statement provides what weighs $\mathbb{W}\in\mathbb{R}^n$ maximize the Sharpe ratio. First, let us denote $\tilde{\pmb \mu}:=(\mu_1-r_f,\,\mu_2-r_f,\,\ldots,\,\mu_n-r_f)$, where $r_f=conts.$ denotes the risk-free return rate.

\begin{proposition}[The maximal Sharpe ratio portfolio]\label{prop:Sharpe}
Let $P={\pmb R}\mathbb{W}^T$ be the random variable, where ${\pmb R}$ denotes the random vector whose covariance matrix $\Sigma$ is non-singular, $\tilde{\pmb \mu}\neq{\pmb 0}$, and $\mathbb{W}\in\mathbb{R}^n$. The Sharpe ratio
\begin{align*}
S_P(\mathbb{W})=\frac{\mathbb{W}\tilde{\boldsymbol{\mu}}^T}{\sqrt{\mathbb{W}\Sigma\mathbb{W}^T}},
\end{align*}
under constraint $\mathbbm{1}\mathbb{W}^T=1$, is maximal when
\begin{align}\label{max_S_portfolio}
\mathbb{W}^T=\frac{\Sigma^{-1}\tilde{\boldsymbol{\mu}}^T}{\mathbbm{1}\Sigma^{-1}\tilde{\boldsymbol{\mu}}^T}.
\end{align}
Moreover,
\begin{align}\label{max_Sharpe}
\max S_P=\sqrt{\tilde{\boldsymbol{\mu}}\Sigma^{-1}\tilde{\boldsymbol{\mu}}^T}\cdot\frac{\left|\mathbbm{1}\Sigma^{-1}\tilde{\boldsymbol{\mu}}^T\right|}{\mathbbm{1}\Sigma^{-1}\tilde{\boldsymbol{\mu}}^T}.
\end{align}
\end{proposition}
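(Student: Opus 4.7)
The plan is to apply the Lagrange multiplier method to $S_P(\mathbb{W})$ under the constraint $\mathbbm{1}\mathbb{W}^T=1$, exploiting the fact that $S_P$ is homogeneous of degree zero in $\mathbb{W}$ under positive scalings. Writing the Lagrange condition $\nabla_{\mathbb{W}} S_P(\mathbb{W}) = \lambda\,\mathbbm{1}$ and taking the inner product with $\mathbb{W}^T$, the left-hand side vanishes by Euler's identity for degree-zero homogeneous functions, while the right-hand side equals $\lambda$ by the constraint; hence $\lambda=0$. In other words, the constrained critical point coincides with an unconstrained critical point of $S_P$, which is a very pleasant simplification.

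First I would compute $\nabla_{\mathbb{W}} S_P$ by the quotient rule, using $\partial(\mathbb{W}\Sigma\mathbb{W}^T)/\partial\mathbb{W}^T = 2\Sigma\mathbb{W}^T$ (by symmetry of $\Sigma$) and $\partial(\mathbb{W}\tilde{\boldsymbol{\mu}}^T)/\partial\mathbb{W}^T=\tilde{\boldsymbol{\mu}}^T$. Setting $\nabla_{\mathbb{W}}S_P=0$ and clearing the square-root denominator yields the vector equation
\[
(\mathbb{W}\Sigma\mathbb{W}^T)\,\tilde{\boldsymbol{\mu}}^T \;=\; (\mathbb{W}\tilde{\boldsymbol{\mu}}^T)\,\Sigma\mathbb{W}^T.
\]
Since $\Sigma$ is non-singular by hypothesis and $\Sigma^{-1}$ is well-defined and symmetric by Lemma~\ref{lem:symmetric_matrix}, this rearranges to $\mathbb{W}^T \propto \Sigma^{-1}\tilde{\boldsymbol{\mu}}^T$. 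Imposing $\mathbbm{1}\mathbb{W}^T=1$ fixes the proportionality constant $c=1/(\mathbbm{1}\Sigma^{-1}\tilde{\boldsymbol{\mu}}^T)$ and gives \eqref{max_S_portfolio} exactly.

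Next I would plug $\mathbb{W}^T = c\,\Sigma^{-1}\tilde{\boldsymbol{\mu}}^T$ back into $S_P$. By the symmetry of $\Sigma^{-1}$, the numerator becomes $c\,\tilde{\boldsymbol{\mu}}\Sigma^{-1}\tilde{\boldsymbol{\mu}}^T$ and the denominator becomes $|c|\sqrt{\tilde{\boldsymbol{\mu}}\Sigma^{-1}\tilde{\boldsymbol{\mu}}^T}$, where $\tilde{\boldsymbol{\mu}}\Sigma^{-1}\tilde{\boldsymbol{\mu}}^T>0$ because $\Sigma^{-1}$ is positive definite (Lemma~\ref{lem:symmetric_matrix}) and $\tilde{\boldsymbol{\mu}}\neq{\pmb 0}$. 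The ratio equals $(c/|c|)\sqrt{\tilde{\boldsymbol{\mu}}\Sigma^{-1}\tilde{\boldsymbol{\mu}}^T}$, and using $c/|c|=|\mathbbm{1}\Sigma^{-1}\tilde{\boldsymbol{\mu}}^T|/(\mathbbm{1}\Sigma^{-1}\tilde{\boldsymbol{\mu}}^T)$ matches \eqref{max_Sharpe}.

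The main obstacle I anticipate is twofold. The bookkeeping when differentiating a scalar quotient involving $\sqrt{\mathbb{W}\Sigma\mathbb{W}^T}$ is error-prone, so I would cross-check the gradient formula against the Euler identity $\nabla S_P\cdot\mathbb{W}=0$. More substantively, a critical-point computation alone does not prove that the value \eqref{max_Sharpe} is a genuine maximum rather than a saddle, and $S_P$ on the hyperplane $\mathbbm{1}\mathbb{W}^T=1$ is not coercive (it stays bounded as $\|\mathbb{W}\|\to\infty$ along directions orthogonal to $\mathbbm{1}$). The cleanest remedy is the Cauchy--Schwarz inequality in the inner product induced by $\Sigma^{-1}$, namely
\[
(\mathbb{W}\tilde{\boldsymbol{\mu}}^T)^2 \;\leq\; (\mathbb{W}\Sigma\mathbb{W}^T)\,(\tilde{\boldsymbol{\mu}}\Sigma^{-1}\tilde{\boldsymbol{\mu}}^T),
\]
with equality precisely when $\mathbb{W}^T$ is a scalar multiple of $\Sigma^{-1}\tilde{\boldsymbol{\mu}}^T$. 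This gives the global bound $|S_P|\leq\sqrt{\tilde{\boldsymbol{\mu}}\Sigma^{-1}\tilde{\boldsymbol{\mu}}^T}$ and, together with the sign bookkeeping above, confirms that the portfolio \eqref{max_S_portfolio} attains the extremal value stated in \eqref{max_Sharpe}.
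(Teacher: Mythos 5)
Your argument is correct and reaches the same two milestones as the paper's proof --- the stationarity equation $(\mathbb{W}\Sigma\mathbb{W}^T)\tilde{\pmb\mu}^T=(\mathbb{W}\tilde{\pmb\mu}^T)\Sigma\mathbb{W}^T$ and a Cauchy--Schwarz-type inequality for global optimality --- but it gets to both by a cleaner route. For the first step, the paper differentiates the non-homogeneous form $(\mathbb{W}{\pmb\mu}^T-r_f)/\sqrt{\mathbb{W}\Sigma\mathbb{W}^T}$ (which coincides with $S_P$ only on the constraint set), obtains $\lambda=r_f/(\mathbb{W}\Sigma\mathbb{W}^T)^{1/2}\neq 0$, and substitutes it back; your observation that $S_P$ itself is positively homogeneous of degree zero, so that Euler's identity forces $\lambda=0$, shortcuts that computation and explains why the constraint only fixes the normalization of $\mathbb{W}$. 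For the second step, the paper perturbs the candidate point by ${\pmb\Delta}$ and reduces $S_P(\tilde{\mathbb{W}})<S_P(\mathbb{W})$ to $({\pmb\Delta}\tilde{\pmb\mu}^T)^2<{\pmb\Delta}\Sigma{\pmb\Delta}^T\cdot\tilde{\pmb\mu}\Sigma^{-1}\tilde{\pmb\mu}^T$, proved by a discriminant argument; your direct bound $(\mathbb{W}\tilde{\pmb\mu}^T)^2\leqslant(\mathbb{W}\Sigma\mathbb{W}^T)(\tilde{\pmb\mu}\Sigma^{-1}\tilde{\pmb\mu}^T)$, with equality iff $\mathbb{W}^T$ is proportional to $\Sigma^{-1}\tilde{\pmb\mu}^T$, is the same inequality applied to $\mathbb{W}$ instead of ${\pmb\Delta}$, and it yields the global statement $|S_P|\leqslant\sqrt{\tilde{\pmb\mu}\Sigma^{-1}\tilde{\pmb\mu}^T}$ in one line, making the Lagrange computation almost redundant. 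One caveat you share with the paper: when $\mathbbm{1}\Sigma^{-1}\tilde{\pmb\mu}^T<0$ the normalizing constant is negative, the point \eqref{max_S_portfolio} is the \emph{anti}-parallel multiple of $\Sigma^{-1}\tilde{\pmb\mu}^T$, and the value \eqref{max_Sharpe} is then the minimum of $S_P$ on the hyperplane (the supremum $+\sqrt{\tilde{\pmb\mu}\Sigma^{-1}\tilde{\pmb\mu}^T}$ is not attained); so the word ``maximal'' tacitly requires $\mathbbm{1}\Sigma^{-1}\tilde{\pmb\mu}^T>0$, a condition neither your write-up nor the paper makes explicit. Your ``sign bookkeeping'' remark gestures at this, but if you want the proof to be airtight you should state the case distinction.
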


\begin{proposition}[The minimal variance portfolio under the given return]\label{prop:min_var_mu_const}
Let $P={\pmb R}\mathbb{W}^T$ be the random variable, where ${\pmb R}$ denotes the random vector whose covariance matrix $\Sigma$ is non-singular, ${\pmb \mu}\neq{\pmb 0}$, and $\mathbb{W}\in\mathbb{R}^n$. The variance $\sigma_P^2=\mathbb{W}\Sigma\mathbb{W}^T$, under constraints $\mathbbm{1}\mathbb{W}^T=1$ and ${\pmb \mu}\mathbb{W}^T=\mu_0=$conts., is minimal when
\begin{align}\label{weights_ef_no_rf}
\mathbb{W}^T=\left(\Sigma^{-1}{\pmb\mu}^T,\,\Sigma^{-1}\mathbbm{1}^T\right)
\begin{pmatrix}
{\pmb\mu}\Sigma^{-1}{\pmb\mu}^T&{\pmb\mu}\Sigma^{-1}\mathbbm{1}^T\\
\mathbbm{1}\Sigma^{-1}{\pmb\mu}^T&\mathbbm{1}\Sigma^{-1}\mathbbm{1}^T
\end{pmatrix}^{-1}
\begin{pmatrix}
\mu_{0}\\
1
\end{pmatrix}.
\end{align}
\end{proposition}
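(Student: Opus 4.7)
The plan is to solve the constrained minimization by the Lagrange multiplier method with two multipliers, one for each linear constraint. I introduce scalars $\lambda_1,\lambda_2$ and form
\begin{align*}
L(\mathbb{W},\lambda_1,\lambda_2)=\mathbb{W}\Sigma\mathbb{W}^T-\lambda_1\left(\mathbbm{1}\mathbb{W}^T-1\right)-\lambda_2\left({\pmb\mu}\mathbb{W}^T-\mu_0\right).
\end{align*}
Using the symmetry of $\Sigma$, differentiating with respect to $\mathbb{W}$ and setting the gradient to zero yields $2\Sigma\mathbb{W}^T=\lambda_1\mathbbm{1}^T+\lambda_2{\pmb\mu}^T$, so that, since $\Sigma^{-1}$ exists by hypothesis,
\begin{align*}
\mathbb{W}^T=\tfrac{1}{2}\lambda_2\,\Sigma^{-1}{\pmb\mu}^T+\tfrac{1}{2}\lambda_1\,\Sigma^{-1}\mathbbm{1}^T=\tfrac{1}{2}\left(\Sigma^{-1}{\pmb\mu}^T,\,\Sigma^{-1}\mathbbm{1}^T\right)\begin{pmatrix}\lambda_2\\ \lambda_1\end{pmatrix}.
\end{align*}

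Next I would substitute this candidate $\mathbb{W}^T$ back into the two scalar constraints ${\pmb\mu}\mathbb{W}^T=\mu_0$ and $\mathbbm{1}\mathbb{W}^T=1$. This produces the linear system
\begin{align*}
\tfrac{1}{2}\begin{pmatrix}{\pmb\mu}\Sigma^{-1}{\pmb\mu}^T & {\pmb\mu}\Sigma^{-1}\mathbbm{1}^T\\ \mathbbm{1}\Sigma^{-1}{\pmb\mu}^T & \mathbbm{1}\Sigma^{-1}\mathbbm{1}^T\end{pmatrix}\begin{pmatrix}\lambda_2\\ \lambda_1\end{pmatrix}=\begin{pmatrix}\mu_0\\ 1\end{pmatrix},
\end{align*}
whose coefficient matrix is exactly the $2\times 2$ block appearing in \eqref{weights_ef_no_rf}. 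Solving for $(\lambda_2,\lambda_1)^T$ by inverting this $2\times 2$ matrix and plugging back into the formula for $\mathbb{W}^T$ collapses (up to cancellation of the factor $\tfrac12$) into the stated expression.

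The one nontrivial ingredient is the invertibility of that $2\times 2$ matrix. I would note that it equals $B\Sigma^{-1}B^T$ with $B=\binom{{\pmb\mu}}{\mathbbm{1}}$; by Lemma~\ref{lem:symmetric_matrix} the matrix $\Sigma^{-1}$ is symmetric and positive definite, hence $B\Sigma^{-1}B^T$ is positive definite as soon as the rows of $B$ are linearly independent, i.e., as soon as ${\pmb\mu}$ is not proportional to $\mathbbm{1}$ (if it were, the two affine constraints would either be inconsistent or redundant, and the problem would degenerate into Proposition~\ref{prop:var}). This is the only real obstacle, and it is the implicit genericity assumption underlying the statement. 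Finally, I would close with the second-order check: the Hessian of $\mathbb{W}\Sigma\mathbb{W}^T$ equals $2\Sigma$, which is positive definite, so the Lagrange stationary point is indeed a global minimizer on the affine subspace cut out by the two constraints.
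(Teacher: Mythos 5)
Your proposal is correct and follows essentially the same route as the paper's proof: a two-multiplier Lagrangian, solving the stationarity condition for $\mathbb{W}^T$ in terms of $(\lambda_1,\lambda_2)$, substituting into the two constraints to obtain the $2\times 2$ linear system, and inverting it (your swap of the roles of $\lambda_1$ and $\lambda_2$ is immaterial). The only divergence is in justifying invertibility of the $2\times 2$ matrix: you argue via positive definiteness of $B\Sigma^{-1}B^T$ for $B$ with linearly independent rows, while the paper invokes the Cauchy--Schwarz inequality under the hypothesis ${\pmb \mu}\neq{\pmb 0}$ --- and your version is in fact the more precise one, since Cauchy--Schwarz yields only a non-strict inequality when ${\pmb\mu}$ is proportional to $\mathbbm{1}$, which is exactly the degenerate case you correctly single out.
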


Proposition \ref{prop:min_var_mu_const} plays the key role in drawing the curve in $(\sigma,\,\mu)$ plane which is called the {\bf efficient frontier}. The efficient frontier denotes the set of portfolios where no other portfolio exists with a higher expected return $\mu_P$ under the same standard deviation $\sigma_P$ (risk) of return. Let us denote $\mu_{\sigma \min}$ the expected return of portfolio \eqref{min_var_w}, i.e.
\begin{align*}
\mu_{\sigma \min}:=\frac{{\pmb\mu} \Sigma^{-1}\mathbbm{1}^T}{\mathbbm{1}\Sigma^{-1}\mathbbm{1}^T}.
\end{align*}

\begin{proposition}[Efficient frontier of risky assets only]\label{prop:EF_risk}
An efficient frontier when the portfolio consists of risky assets only is
\begin{align}\label{eq:EF_risky_only}
\sigma^2=a\mu^2+b\mu+c,\,\mu\geqslant \mu_{\min\sigma},
\end{align}
where
\begin{align*}
&a=\frac{\mathbbm{1}\Sigma^{-1}\mathbbm{1}^T}{d},\,
b=-2\cdot\frac{{\pmb\mu}\Sigma^{-1}\mathbbm{1}^T}{d},\,
c=\frac{{\pmb\mu}\Sigma^{-1}{\pmb\mu}^T}{d},\\
&d=\left({\pmb\mu}\Sigma^{-1}{\pmb\mu}^T\right)\left(\mathbbm{1}\Sigma^{-1}\mathbbm{1}^T\right)-\left({\pmb\mu}\Sigma^{-1}\mathbbm{1}^T\right)^2>0 \text{ if } {\pmb \mu}\neq{\pmb 0}.
\end{align*}

\end{proposition}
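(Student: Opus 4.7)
The plan is to substitute the explicit optimal weights from Proposition~\ref{prop:min_var_mu_const} into $\sigma_P^2 = \mathbb{W}\Sigma\mathbb{W}^T$ and simplify. Introducing the shorthand
\[
M := \bigl(\Sigma^{-1}{\pmb\mu}^T,\,\Sigma^{-1}\mathbbm{1}^T\bigr),\qquad
A := \begin{pmatrix}
{\pmb\mu}\Sigma^{-1}{\pmb\mu}^T & {\pmb\mu}\Sigma^{-1}\mathbbm{1}^T \\
\mathbbm{1}\Sigma^{-1}{\pmb\mu}^T & \mathbbm{1}\Sigma^{-1}\mathbbm{1}^T
\end{pmatrix},\qquad v := \begin{pmatrix} \mu_0 \\ 1 \end{pmatrix},
\]
the weight formula \eqref{weights_ef_no_rf} becomes $\mathbb{W}^T = M A^{-1} v$. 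Since $A$ is symmetric (its off-diagonal entries are the same scalar), so is $A^{-1}$, and therefore $\mathbb{W} = v^T A^{-1} M^T$.

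The crucial algebraic observation is that
\[
M^T \Sigma M = \begin{pmatrix} {\pmb\mu}\Sigma^{-1} \\ \mathbbm{1}\Sigma^{-1} \end{pmatrix} \Sigma \bigl(\Sigma^{-1}{\pmb\mu}^T,\,\Sigma^{-1}\mathbbm{1}^T\bigr) = A,
\]
because the middle $\Sigma$ is absorbed by the surrounding inverses. Consequently $\sigma_P^2 = v^T A^{-1} (M^T \Sigma M) A^{-1} v = v^T A^{-1} v$. Applying the standard $2\times 2$ inverse formula to $A^{-1}$, with $\det A = d$, and multiplying out $v^T A^{-1} v$ as a polynomial in $\mu_0$ produces the stated coefficients $a$, $b$, $c$ after grouping the $\mu_0^2$, $\mu_0$, and constant terms, since $\mu_0$ is arbitrary.

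For $d>0$, I would invoke Lemma~\ref{lem:symmetric_matrix}: since $\Sigma$ is symmetric, positively defined, and non-singular, so is $\Sigma^{-1}$. Thus $\langle x, y\rangle := x \Sigma^{-1} y^T$ is an inner product on $\mathbb{R}^n$, and the Cauchy--Schwarz inequality yields
\[
\bigl({\pmb\mu}\Sigma^{-1}\mathbbm{1}^T\bigr)^2 \leqslant \bigl({\pmb\mu}\Sigma^{-1}{\pmb\mu}^T\bigr)\bigl(\mathbbm{1}\Sigma^{-1}\mathbbm{1}^T\bigr),
\]
with equality iff ${\pmb\mu}$ and $\mathbbm{1}$ are linearly dependent. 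This is the delicate point of the statement: the bare hypothesis ${\pmb\mu}\neq{\pmb 0}$ does not exclude ${\pmb\mu}=c\mathbbm{1}$, so strict positivity really needs the mild genericity assumption that ${\pmb\mu}$ is not a scalar multiple of $\mathbbm{1}$, a caveat I would record in the writeup.

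The final restriction $\mu\geqslant\mu_{\min\sigma}$ reflects the efficient half of the hyperbola. As a function of $\mu_0$, $\sigma^2=a\mu_0^2+b\mu_0+c$ is an upward-opening parabola (since $a>0$) whose vertex sits at $\mu_0=-b/(2a)={\pmb\mu}\Sigma^{-1}\mathbbm{1}^T/\mathbbm{1}\Sigma^{-1}\mathbbm{1}^T=\mu_{\min\sigma}$, matching Proposition~\ref{prop:var}. For any $\sigma>\sigma_{\min}$ the relation admits two values of $\mu$, and the rational-investor principle selects the larger one. The main obstacle is bookkeeping: executing the $M^T\Sigma M = A$ cancellation cleanly and handling the Cauchy--Schwarz equality case for $d$.
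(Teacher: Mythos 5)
Your proof is correct and follows essentially the same route as the paper: substitute the weights of Proposition~\ref{prop:min_var_mu_const} into $\mathbb{W}\Sigma\mathbb{W}^T$, use the cancellation $M^T\Sigma M=A$ to reduce the variance to $v^TA^{-1}v$, and expand via the $2\times2$ inverse, with Cauchy--Schwarz giving $d>0$. Your remark about the equality case is a genuine improvement on the paper's statement: strict positivity of $d$ requires that ${\pmb\mu}$ and $\mathbbm{1}$ be linearly independent, which the hypothesis ${\pmb\mu}\neq{\pmb 0}$ alone does not guarantee.
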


Notice the condition $\mu\geqslant\mu_{\min \sigma}$ in \eqref{eq:EF_risky_only} defines the upper branch of the hyperbola in the plane $(\mu,\,\sigma)$. By letting $\mu\in\mathbb{R}$ in \eqref{eq:EF_risky_only}, we get the total hyperbola which is often referred to as the {\bf minimum variance frontier}. We call the portfolios, lying on the efficient frontier, {\bf efficient}. According to the core assumption of the Markowitz theory, efficient portfolios are always preferred by rational investors.  

\begin{proposition}[Efficient frontier including the risk-free asset]\label{prop:EF_with_rf}
An efficient frontier when the portfolio consists of risky assets and the risk-free asset is
\begin{align*}
\mu=\sqrt{\tilde{\pmb{\mu}}\Sigma^{-1}\tilde{\pmb{\mu}}^T}\cdot\sigma+r_f,\,\sigma \geqslant 0.
\end{align*}
\end{proposition}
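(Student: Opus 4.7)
The plan is to reduce the problem to minimizing the variance of the risky portion of the portfolio alone, letting the risk-free asset absorb whatever is left over from the budget. Let $w_f$ denote the proportion invested in the risk-free asset and $\mathbb{W}=(w_1,\ldots,w_n)$ the vector of weights on the risky assets, subject only to the single budget constraint $w_f+\mathbbm{1}\mathbb{W}^T=1$. Substituting $w_f=1-\mathbbm{1}\mathbb{W}^T$ into $P=w_f r_f+{\pmb R}\mathbb{W}^T$ yields $\mu_P=r_f+\tilde{\pmb\mu}\mathbb{W}^T$ and, because $r_f$ is deterministic, $\sigma_P^2=\mathbb{W}\Sigma\mathbb{W}^T$. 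Note that the constraint $\mathbbm{1}\mathbb{W}^T=1$ present in Propositions \ref{prop:var} and \ref{prop:min_var_mu_const} is no longer imposed, since any excess or shortfall is absorbed by $w_f$.

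Next, I would fix a target expected return $\mu$ and minimize $\mathbb{W}\Sigma\mathbb{W}^T$ subject to the single affine constraint $\tilde{\pmb\mu}\mathbb{W}^T=\mu-r_f$ via the Lagrange multiplier method. Setting the gradient of $L(\mathbb{W},\lambda)=\mathbb{W}\Sigma\mathbb{W}^T-\lambda(\tilde{\pmb\mu}\mathbb{W}^T-(\mu-r_f))$ to zero gives $2\Sigma\mathbb{W}^T=\lambda\tilde{\pmb\mu}^T$, hence $\mathbb{W}^T=(\lambda/2)\Sigma^{-1}\tilde{\pmb\mu}^T$; plugging this into the constraint determines $\lambda/2=(\mu-r_f)/(\tilde{\pmb\mu}\Sigma^{-1}\tilde{\pmb\mu}^T)$. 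This quotient is well-defined: Lemma \ref{lem:symmetric_matrix} guarantees that $\Sigma^{-1}$ is positive definite and the hypothesis $\tilde{\pmb\mu}\neq{\pmb 0}$ then forces $\tilde{\pmb\mu}\Sigma^{-1}\tilde{\pmb\mu}^T>0$. The same positive definiteness of $\Sigma$ ensures that this critical point is a genuine global minimum.

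Substituting the optimal $\mathbb{W}$ back into the variance and using $\Sigma^{-1}\Sigma\Sigma^{-1}=\Sigma^{-1}$ produces $\sigma_P^2=(\mu-r_f)^2/(\tilde{\pmb\mu}\Sigma^{-1}\tilde{\pmb\mu}^T)$, so $\sigma_P=|\mu-r_f|/\sqrt{\tilde{\pmb\mu}\Sigma^{-1}\tilde{\pmb\mu}^T}$. Restricting to the upper branch $\mu\geqslant r_f$ and solving for $\mu$ yields the stated relation $\mu=\sqrt{\tilde{\pmb\mu}\Sigma^{-1}\tilde{\pmb\mu}^T}\cdot\sigma+r_f$, $\sigma\geqslant 0$. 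The main obstacle is conceptual rather than computational: one must justify that the efficient frontier consists only of the upper branch of the absolute-value graph (the rational investor prefers the larger $\mu$ for a fixed $\sigma$) and should remark that the slope $\sqrt{\tilde{\pmb\mu}\Sigma^{-1}\tilde{\pmb\mu}^T}$ coincides with the maximal Sharpe ratio of Proposition \ref{prop:Sharpe}, thereby identifying this frontier geometrically as the capital market line tangent to the risky-assets hyperbola from Proposition \ref{prop:EF_risk} and emanating from the point $(0,r_f)$.
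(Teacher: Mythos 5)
Your proposal is correct and follows essentially the same route as the paper: substitute $w_f=1-\mathbbm{1}\mathbb{W}^T$ to obtain $\mu_P-r_f=\tilde{\pmb\mu}\mathbb{W}^T$ and $\sigma_P^2=\mathbb{W}\Sigma\mathbb{W}^T$, then minimize the variance under the single constraint $\tilde{\pmb\mu}\mathbb{W}^T=\mu-r_f$ by Lagrange multipliers, arriving at $\sigma_P^2=(\mu-r_f)^2/\tilde{\pmb\mu}\Sigma^{-1}\tilde{\pmb\mu}^T$. Your explicit justification of $\tilde{\pmb\mu}\Sigma^{-1}\tilde{\pmb\mu}^T>0$ and of discarding the lower branch is a small clarity improvement over the paper's bare $\pm$ sign, but the argument is the same.
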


We call the portfolio \eqref{max_S_portfolio} {\bf optimal} and denote it by $M$. In addition, by $\sigma_M$ and $\mu_M$ we denote the standard deviation and the expected return of the optimal portfolio correspondingly.

\begin{proposition}[Tangent line of the efficient frontier at optimal portfolio]\label{prop:Tangent}
The tangent line of the efficient frontier given in \eqref{eq:EF_risky_only} at the optimal portfolio $M$ is
\begin{align*}
\mu=\frac{\mu_M-r_f}{\sigma_M}\cdot\sigma+r_f,\,\sigma\geqslant0.
\end{align*}
\end{proposition}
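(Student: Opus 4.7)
The plan is to reduce the proposition to a single algebraic identity. Since the proposed line clearly passes through $(\sigma_M, \mu_M)$ by substitution, everything boils down to matching its slope $(\mu_M - r_f)/\sigma_M$ with the tangent slope of the hyperbola \eqref{eq:EF_risky_only} at $M$. First I would differentiate the equation $\sigma^2 = a\mu^2 + b\mu + c$ implicitly, obtaining $d\mu/d\sigma = 2\sigma/(2a\mu + b)$ and hence the tangent slope $2\sigma_M/(2a\mu_M + b)$ at $M$. Equating this to $(\mu_M - r_f)/\sigma_M$, cross-multiplying, and using $\sigma_M^2 = a\mu_M^2 + b\mu_M + c$, the tangency requirement collapses to the linear relation
\begin{align*}
(2ar_f + b)\mu_M + br_f + 2c = 0.
\end{align*}

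The second step is to verify this identity by substituting the concrete values of $a,b,c,\mu_M$. Writing $A := \mathbbm{1}\Sigma^{-1}\mathbbm{1}^T$, $B := {\pmb\mu}\Sigma^{-1}\mathbbm{1}^T$, $C := {\pmb\mu}\Sigma^{-1}{\pmb\mu}^T$, Proposition \ref{prop:EF_risk} gives $a = A/d$, $b = -2B/d$, $c = C/d$ with $d = AC - B^2$, while Proposition \ref{prop:Sharpe}'s weight formula \eqref{max_S_portfolio} together with ${\pmb\mu}\Sigma^{-1}\tilde{\pmb\mu}^T = C - r_f B$ and $\mathbbm{1}\Sigma^{-1}\tilde{\pmb\mu}^T = B - r_f A$ yields $\mu_M = (C - r_f B)/(B - r_f A)$. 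After multiplying through by $d/2$, the identity above reduces to $(Ar_f - B)\mu_M = Br_f - C$, which is exactly the cross-multiplied form of the formula just obtained for $\mu_M$.

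The main obstacle is purely bookkeeping: one has to juggle the four scalars $A, B, C, r_f$ without expanding prematurely and losing sight of the cancellations. A slightly different route that avoids the implicit derivative is to compute $(\mu_M - r_f)/\sigma_M$ directly from \eqref{max_S_portfolio}, recognise it via Proposition \ref{prop:Sharpe} as $\sqrt{\tilde{\pmb\mu}\Sigma^{-1}\tilde{\pmb\mu}^T}$ (up to the sign convention already handled there), and then show that this value coincides with the hyperbola's tangent slope at $M$. The algebra is comparable in length, but the geometric content, namely that the tangent slope at the optimal portfolio equals its maximal Sharpe ratio, becomes especially transparent.
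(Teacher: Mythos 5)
Your proof is correct, but it takes a genuinely different route from the paper's. The paper never differentiates the hyperbola: it looks for the common points of the two efficient frontiers of Propositions \ref{prop:EF_risk} and \ref{prop:EF_with_rf}, shows via \eqref{weights_with_rf} that imposing $\mathbbm{1}\mathbb{W}^T=1$ (i.e.\ $w_f=0$) forces that common point to be exactly the optimal portfolio \eqref{max_S_portfolio}, and then declares the tangent to be the line through $(0,\,r_f)$ and $(\sigma_M,\,\mu_M)$ --- implicitly invoking the fact that a line meeting the convex branch at a single point without crossing it must be tangent there. You instead verify tangency analytically: implicit differentiation gives the slope $2\sigma_M/(2a\mu_M+b)$, and matching it to $(\mu_M-r_f)/\sigma_M$ reduces, after substituting $\sigma_M^2=a\mu_M^2+b\mu_M+c$, to $(Ar_f-B)\mu_M=Br_f-C$, which is precisely the cross-multiplied form of $\mu_M=(C-r_fB)/(B-r_fA)$; I checked this chain and it closes. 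Your computation is more self-contained (it uses neither Proposition \ref{prop:EF_with_rf} nor the uniqueness-of-intersection characterization of tangency that the paper leaves tacit), at the cost of the scalar bookkeeping you describe; the paper's route is shorter and makes the picture ``$M$ is the unique purely risky portfolio on the capital market line'' explicit. One step you should spell out: you use that $(\sigma_M,\,\mu_M)$ actually satisfies $\sigma_M^2=a\mu_M^2+b\mu_M+c$, i.e.\ that the Sharpe-maximizing portfolio lies on the minimum variance frontier. This needs either the one-line optimality argument (a portfolio with return $\mu_M$ and strictly smaller variance would have a strictly larger Sharpe ratio, which requires $\mu_M-r_f>0$) or the direct check
\begin{align*}
\sigma_M^2=\frac{\tilde{\pmb\mu}\Sigma^{-1}\tilde{\pmb\mu}^T}{\left(\mathbbm{1}\Sigma^{-1}\tilde{\pmb\mu}^T\right)^2}=\frac{A\mu_M^2-2B\mu_M+C}{d},
\end{align*}
which is routine with your notation. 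With that line added, your argument is complete.
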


Notice that the slope of the semi-line in Proposition \ref{prop:Tangent} is the Sharpe ratio of the optimal portfolio. Thus, $(\mu_M-r_f)/\sigma_M$ equals to the maximal Sharpe ratio given in \eqref{max_Sharpe}. Moreover, the slope of the semi-line in Proposition \ref{prop:EF_with_rf} equals the maximal Sharpe ratio too when $\mu_M-r_f>0$. In addition, the semi-line in Proposition \ref{prop:Tangent} can be easily rewritten in its parametric form:
\begin{align*}
\begin{cases}
\mu=w\mu_M+(1-w)r_f\\
\sigma=w\sigma_M
\end{cases},\,w\geqslant0.
\end{align*}
Thus, portfolios of the efficient frontier, when there is risk-free investment included, are just the combinations of optimal portfolio $M$ and risk-free asset $F$, i.e. "$wM+(1-w)F,\,w\geqslant0$".

\begin{proposition}[Mutual fund separation theorem]\label{prop:Separation}
Let $\mathbb{W}_1,\,\mathbb{W}_2,\,\ldots,\,\mathbb{W}_m$, where $m\in\mathbb{N}$ and $\mathbb{W}_i\in\mathbb{R}^n$ for all $i=1,\,2,\,\ldots,\,m$, be the efficient portfolios whose expected returns are $\mu_{0,\,1},\,\mu_{0,\,2},\,\ldots,\,\mu_{0,\,m}$ respectively. Let
$\mathbb{W}\in\mathbb{R}^m$ be such that
\begin{align}\label{syst}
\begin{cases}
w_1+w_2+\ldots+w_m=1\\
w_1\mu_{0,\,1}+w_2\mu_{0,\,2}+\ldots+w_m\mu_{0,\,m}=\tilde{\mu}_0\geqslant\mu_{\sigma \min}
\end{cases}.
\end{align}
Then the portfolio $w_1\mathbb{W}_1+w_2\mathbb{W}_2+\ldots+w_m\mathbb{W}_m$ is also efficient.
\end{proposition}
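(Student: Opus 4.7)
My approach rests on a single structural observation: the formula \eqref{weights_ef_no_rf} is \emph{affine} in $\mu_0$. Writing
\[
A := \bigl(\Sigma^{-1}{\pmb\mu}^T,\; \Sigma^{-1}\mathbbm{1}^T\bigr)\begin{pmatrix}
{\pmb\mu}\Sigma^{-1}{\pmb\mu}^T & {\pmb\mu}\Sigma^{-1}\mathbbm{1}^T\\
\mathbbm{1}\Sigma^{-1}{\pmb\mu}^T & \mathbbm{1}\Sigma^{-1}\mathbbm{1}^T
\end{pmatrix}^{-1},
\]
Proposition \ref{prop:min_var_mu_const} says that the unique minimum-variance portfolio with expected return $\mu_0$ has weights $\mathbb{W}^T = A\,(\mu_0,\,1)^T$. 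Since each $\mathbb{W}_i$ is efficient with expected return $\mu_{0,i}\geqslant\mu_{\sigma\min}$, we therefore have $\mathbb{W}_i^T = A\,(\mu_{0,i},\,1)^T$ for every $i=1,\,2,\,\ldots,\,m$.

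Next, I would form the combined vector $\mathbb{W}^\ast := w_1\mathbb{W}_1 + w_2\mathbb{W}_2 + \ldots + w_m\mathbb{W}_m$ and use linearity together with the two equations in \eqref{syst} to obtain
\[
(\mathbb{W}^\ast)^T = \sum_{i=1}^{m} w_i\, A\begin{pmatrix}\mu_{0,\,i}\\1\end{pmatrix} = A\begin{pmatrix}\sum_{i=1}^m w_i\mu_{0,\,i}\\ \sum_{i=1}^m w_i\end{pmatrix} = A\begin{pmatrix}\tilde{\mu}_0\\1\end{pmatrix}.
\]
This is exactly the closed form given by Proposition \ref{prop:min_var_mu_const} applied with target return $\tilde{\mu}_0$, so $\mathbb{W}^\ast$ coincides with the (unique) minimum-variance portfolio achieving expected return $\tilde{\mu}_0$. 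Because $\tilde{\mu}_0\geqslant\mu_{\sigma\min}$, this portfolio lies on the upper branch of the hyperbola \eqref{eq:EF_risky_only} described in Proposition \ref{prop:EF_risk}, i.e.\ on the efficient frontier.

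For bookkeeping I would verify directly that the two admissibility constraints hold for $\mathbb{W}^\ast$: the budget identity $\mathbbm{1}(\mathbb{W}^\ast)^T = \sum_i w_i\,\mathbbm{1}\mathbb{W}_i^T = \sum_i w_i = 1$ (using that each $\mathbb{W}_i$ is itself a portfolio) and the return identity ${\pmb\mu}(\mathbb{W}^\ast)^T = \sum_i w_i\mu_{0,\,i} = \tilde{\mu}_0$; both are immediate from \eqref{syst}.

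I do not anticipate a serious obstacle; the proof is essentially the remark that affine maps commute with convex (in fact affine) combinations. The only subtlety worth flagging is \emph{uniqueness} of the minimizer in Proposition \ref{prop:min_var_mu_const}: matching the formula must genuinely imply efficiency, not mere feasibility. If the uniqueness were not already packaged into the statement, I would appeal to strict convexity of $\mathbb{W}\mapsto\mathbb{W}\Sigma\mathbb{W}^T$, which follows from positive-definiteness of $\Sigma^{-1}$ (Lemma \ref{lem:symmetric_matrix}) and Lemma \ref{lem:nonsingular}, to conclude that the Lagrangian stationary point produced by \eqref{weights_ef_no_rf} is the unique global minimum on the two affine constraints.
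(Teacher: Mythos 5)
Your proposal is correct and follows essentially the same route as the paper: both write each efficient $\mathbb{W}_i$ via the closed form \eqref{weights_ef_no_rf} and exploit its affine dependence on the target return together with the constraints \eqref{syst} to recognize the combination as the formula evaluated at $\tilde{\mu}_0$. Your explicit remark on uniqueness of the minimizer (needed so that matching the formula implies efficiency rather than mere feasibility) is a welcome clarification that the paper leaves implicit.
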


If $m=2$ in Proposition \ref{prop:Separation} and $\mu_{0,\,1}\neq\mu_{0,\,2}$, the system \eqref{syst} implies
\begin{align}\label{two-fund}
(w_1,\,w_2)=\left(\frac{\tilde{\mu}_0-\mu_{0,\,2}}{\mu_{0,\,1}-\mu_{0,\,2}},\,\frac{\mu_{0,\,1}-\tilde{\mu}_0}{\mu_{0,\,1}-\mu_{0,\,2}}\right).
\end{align}
These proportions \eqref{two-fund} are met in the so-called two-fund separation theorem. Notice that the second equation in \eqref{syst} is nothing but the expected return of the portfolio $w_1\mathbb{W}_1+w_2\mathbb{W}_2+\ldots+w_m\mathbb{W}_m$. In practice, the mutual fund separation theorem is valuable because under some favorable circumstances it is more convenient or rational, in terms of transaction costs, to acquire the portfolios (funds) $\mathbb{W}_1,\,\mathbb{W}_2,\,\ldots,\,\mathbb{W}_m$ and form the desired efficient portfolio out of them rather than to acquire the single assets of the financial firms $A_1,\,A_2,\,\ldots,\,A_n$; see \cite{Fund}, \cite{Fund_1}, \cite{Fund_2}.

The listed Propositions \ref{prop:var}--\ref{prop:Separation} consist of the essence of the portfolio selection theory. As mentioned, Markowit's works had a significant influence on many other fields or views of the surrounding phenomena. For instance, the shape of the efficient frontier tells us that ''extra benefits are not free''. In other words, the higher expected return corresponds to the higher standard deviation (risk), and the level of risk tolerance is up to the investors themselves. That encouraged developments in {\bf risk aversion} and {\bf utility theory}, see \cite{utility}, \cite{utility_1}. Also, the understanding of risk was extended realizing that the standard deviation as such does not tell the whole story. As a measure of returns' scattering around its expectation, the standard deviation and similar measures provide only the so-called own (idiosyncratic) risk of the asset that generates the return. The {\bf capital asset pricing model} (see \cite{CAPM}, \cite{CAPM_2}, \cite{CAPM_1}, \cite{CAPM_3}, \cite[p. 132]{Muller_1988}, \cite{CAPM_4}) states that under some ideal-word assumptions, the expected return of any asset $A$ is implied by the equation
\begin{align}\label{CAPM}
\mathbb{E}r_A=r_f+\beta_i(\mathbb{E}r_S-r_f),\,\beta_i=\frac{\cov(r_A,\,r_S)}{\sigma^2_S},
\end{align}
where $S$ in \eqref{CAPM} denotes the systemic (market) portfolio. The coefficient $\beta_i$ in \eqref{CAPM}, the slope of this linear equation, is interpreted as a measure of systemic risk that is raised by the systemic portfolio $S$. Based on \eqref{CAPM}, in $(\beta,\,\mu)$ plane one may define the so-called {\bf security market line}   
\begin{align}\label{SML}
\mu=r_f+\beta(\mathbb{E}r_S-r_f),\,\beta\in\mathbb{R}.
\end{align}
The line \eqref{SML} is a basic tool providing the equilibrium of the surrounding returns, see \cite{SML}, \cite{SML_1}. In \cite[p. 65]{KNU} one may find a certain related instance where the foundations of the capital asset pricing model appear. It is said that a certain asset is fairly evaluated if its expected return belongs to the line \eqref{SML}. On the other hand, assets below the line \eqref{SML} are known as overestimated, while the above ones are called underestimated. 
Based on that, there was built the {\bf arbitrage pricing theory}, see \cite{arbitrage}, \cite{arbitrage_1}. The arbitrage pricing theory essentially states that under certain assumptions the arbitrage (guaranteed profit with zero investments and no risk) is possible if there are assets whose returns do not lie on the line \eqref{SML}. Last but not least, the vector of expected returns and covariance matrix are just two out of plenty of other numerical characteristics. The ideas of the mean-variance analysis inspired the development of the so-called {\bf post-modernistic theory}. The post-modernistic theory optimizes plenty of other numerical cha\-rac\-te\-ris\-tics (quantiles, etc.) in a similar way that Markowitz's theory does the job in mean-variance optimization, see \cite{post}, \cite{post_1}, \cite{post_2}, \cite{post_3}, \cite{post_4}, \cite{post_5}. 
%Proofs of the above-formulated Propositions \ref{prop:var}--%\ref{prop:Separation} are given in the next section.

\section{Proofs}

\begin{proof}[Proof of Proposition \ref{prop:var}.]
We shall find the point $\mathbb{W}\in\mathbb{R}^n$ such that the function $\sigma_P^2=\mathbb{W}\Sigma\mathbb{W}^T$ attains its minimum under constraint $\mathbbm{1}\mathbb{W}^T=1$. Let us setup the Lagrangian function $L(\mathbb{W},\,\lambda)=\mathbb{W}\Sigma\mathbb{W}^T-\lambda(\mathbbm{1}\mathbb{W}^T-1)$. By calculating the partial derivatives of $L(\mathbb{W},\,\lambda)$ with respect to $w_1,\,w_2,\,\ldots,\,w_n$ and $\lambda$, we obtain the following system of linear equations
\begin{align}\label{syst:_min_var}
\begin{cases}
2\Sigma\mathbb{W}^T-\lambda \mathbbm{1}^T={\pmb 0}^T,\\
\mathbbm{1}\mathbb{W}^T=1.
\end{cases}
\end{align}
From the first equation in \eqref{syst:_min_var} we get
\begin{align*}
\mathbb{W}^T=\frac{\lambda}{2}\Sigma^{-1}\mathbbm{1}^T.
\end{align*}
This and the last equation in \eqref{syst:_min_var} yields
\begin{align*}
\mathbbm{1}\frac{\lambda}{2}\Sigma^{-1}\mathbbm{1}^T=1 \quad \Rightarrow \quad \frac{\lambda}{2}=\frac{1}{\mathbbm{1}\Sigma^{-1}\mathbbm{1}^T}.
\end{align*}
Thus,
\begin{align}\label{min_sigma}
\mathbb{W}^T=\frac{\Sigma^{-1}\mathbbm{1}^T}{\mathbbm{1}\Sigma^{-1}\mathbbm{1}^T}.
\end{align}

To prove that the point \eqref{min_sigma} is minimum, we observe that the Hessian matrix 
\begin{align*}
\frac{\partial^2 L(\mathbb{W},\,\lambda)}{\partial w_i \partial w_j}=\frac{\partial \left(2\Sigma\mathbb{W}^T-\lambda \mathbbm{1}^T\right)}{\partial w_i}=2\Sigma,
\end{align*}
for all $i,\,j=1,\,2,\,\ldots,\,n$. This yields the positive second differential:
\begin{align*}
d^2\sigma_P^2(\mathbb{W})=(\Delta w_1,\,\Delta w_2,\,\ldots,\Delta w_n)2\Sigma(\Delta w_1,\,\Delta w_2,\,\ldots,\Delta w_n)^T>0,
\end{align*}
for all $(\Delta w_1,\,\Delta w_2,\,\ldots,\,\Delta w_n)\in\mathbb{R}^n\setminus\{\pmb 0\}$. 

By inserting the point \eqref{min_sigma} into $\sigma_P^2(\mathbb{W})=\mathbb{W}\Sigma\mathbb{W}^T$ we get
\begin{align*}
\min \sigma^2=\frac{\mathbbm{1}\Sigma^{-1}}{\mathbbm{1}\Sigma^{-1}\mathbbm{1}^T}
\Sigma
\frac{\Sigma^{-1}\mathbbm{1}^T}{\mathbbm{1}\Sigma^{-1}\mathbbm{1}^T}
=\frac{1}{\mathbbm{1}\Sigma^{-1}\mathbbm{1}^T}.
\end{align*}
Here we have used the symmetry of $\Sigma^{-1}$, see Lemma \ref{lem:symmetric_matrix}. 
\end{proof}

\bigskip

\begin{proof}[Proof of Proposition \ref{prop:Sharpe}.]
Let us set up the Lagrangian function
\begin{align*}
L(\mathbb{W},\lambda)=\frac{\mathbb{W}\boldsymbol{\mu}^T-r_f}{\sqrt{\mathbb{W}\Sigma\mathbb{W}^T}}-\lambda(\mathbbm{1}\mathbb{W}^T-1).
\end{align*}
Its partial derivatives with respect to $w_1,\,w_2,\,\ldots,\,w_n$ are
\begin{align*}
\frac{\partial L(\mathbb{W},\,\lambda)}{\partial w_i}=\frac{{\pmb \mu}^T\sqrt{\mathbb{W}\Sigma\mathbb{W}^T}-(\mathbb{W}{\pmb \mu}^T-r_f)\frac{\Sigma\mathbb{W}^T}{\sqrt{\mathbb{W}\Sigma\mathbb{W}^T}}}{\mathbb{W}\Sigma\mathbb{W}^T}
-\lambda\mathbbm{1}^T,
\end{align*}
for all $i=1,\,2,\,\ldots,\,n$. This and the partial derivative of $L(\mathbb{W},\,\lambda)$ with respect to $\lambda$, yields the following system of linear equations
\begin{align}\label{syst:max_S}
\begin{cases}
{\pmb \mu}^T\left(\mathbb{W}\Sigma\mathbb{W}^T\right)-\left(\mathbb{W}{\pmb \mu}^T-r_f\right)\Sigma\mathbb{W}^T
-\lambda\mathbbm{1}^T\left(\mathbb{W}\Sigma\mathbb{W}^T\right)^{3/2}={\pmb 0}^T,\\
\mathbbm{1}\mathbb{W}^T=1.
\end{cases}
\end{align}
By multiplying the both sides of the first equation in \eqref{syst:max_S} by $\mathbb{W}$ and simplifying, we obtain
\begin{align*}
\mathbb{W}{\pmb \mu}^T-(\mathbb{W}{\pmb \mu}^T-r_f)-\lambda\mathbb{W}\mathbbm{1}^T\left(\mathbb{W}\Sigma\mathbb{W}^T\right)^{1/2}=0
\quad \Rightarrow \quad
\lambda=\frac{r_f}{\left(\mathbb{W}\Sigma\mathbb{W}^T\right)^{1/2}}.
\end{align*}
By inserting this expression of $\lambda$ into the first equation of \eqref{syst:max_S}, we obtain
\begin{align}\label{eq:before_Hess}
&{\pmb \mu}^T\left(\mathbb{W}\Sigma\mathbb{W}^T\right)-\left(\mathbb{W}{\pmb\mu}^T-r_f\right)\Sigma\mathbb{W}^T
-r_f\mathbbm{1}^T\left(\mathbb{W}\Sigma\mathbb{W}^T\right)={\pmb 0}^T\\ \nonumber
&\Rightarrow\,
\left(\mathbb{W}{\pmb \mu}^T-r_f\right)\Sigma\mathbb{W}^T=\left(\mathbb{W}\Sigma\mathbb{W}^T\right)({\pmb \mu}^T-r_f\mathbbm{1}^T)\\ \nonumber
&\Rightarrow\,
\mathbb{W}^T=\frac{\mathbb{W}\Sigma\mathbb{W}^T}{\mathbb{W}{\pmb \mu}^T-r_f}\Sigma^{-1}({\pmb \mu}-r_f\mathbbm{1})^T,\,\tilde{\pmb\mu}\neq{\pmb 0}.
\end{align}
By inserting this into the last equation of \eqref{syst:max_S}, under the requirement $\tilde{\pmb\mu}\neq{\pmb 0}$, we get
\begin{align*}
\mathbbm{1}\frac{\mathbb{W}\Sigma\mathbb{W}^T}{\mathbb{W}{\pmb \mu}^T-r_f}\Sigma^{-1}({\pmb \mu}-r_f\mathbbm{1})^T=1
\quad \Rightarrow \quad
\frac{\mathbb{W}\Sigma\mathbb{W}^T}{\mathbb{W}{\pmb \mu}^T-r_f}=\frac{1}{\mathbbm{1}\Sigma^{-1}({\pmb \mu}-r_f\mathbbm{1})^T}.
\end{align*}
Thus,
\begin{align}\label{max_S}
\mathbb{W}^T=\frac{\Sigma^{-1}({\pmb \mu}-r_f\mathbbm{1})^T}{\mathbbm{1}\Sigma^{-1}({\pmb \mu}-r_f\mathbbm{1})^T}
=\frac{\Sigma^{-1}\tilde{{\pmb \mu}}^T}{\mathbbm{1}\Sigma^{-1}\tilde{{\pmb \mu}}^T}.
\end{align}

To prove that the point \eqref{max_S} is maximum, we first compute the Sharpe ratio in this point: 
\begin{align*}
S_P(\mathbb{W})&=\frac{\mathbb{W}\tilde{{\pmb \mu}}^T}{\sqrt{\mathbb{W}\Sigma\mathbb{W}^T}}
=\frac{\tilde{{\pmb\mu}}\Sigma^{-1}\tilde{{\pmb \mu}}^T}{\mathbbm{1}\Sigma^{-1}\tilde{{\pmb \mu}}^T}\Bigg{/}
\sqrt{\frac{\tilde{{\pmb\mu}}\Sigma^{-1}}{\mathbbm{1}\Sigma^{-1}\tilde{\pmb{\mu}}^T}\cdot \Sigma \cdot\frac{\Sigma^{-1}\tilde{{\pmb \mu}}^T}{\mathbbm{1}\Sigma^{-1}\tilde{{\pmb \mu}}^T}}
=\sqrt{\tilde{{\pmb\mu}}\Sigma^{-1}\tilde{{\pmb \mu}}^T}\cdot
\frac{\left|\mathbbm{1}\Sigma^{-1}\tilde{{\pmb\mu}}^T \right|}{\mathbbm{1}\Sigma^{-1}\tilde{{\pmb \mu}}^T}.
\end{align*}
Let us take some other point in $\mathbb{R}^n$ than \eqref{max_S}
\begin{align}\label{max_S_delta}
\tilde{\mathbb{W}}^T=\frac{\Sigma^{-1}\tilde{{\pmb\mu}}^T}{\mathbbm{1}\Sigma^{-1}\tilde{{\pmb\mu}}^T}+{\pmb \Delta}^T,
\end{align}
where ${\pmb \Delta}:=(\delta_1,\,\ldots,\,\delta_n)\in\mathbb{R}^n$. Then the Sharpe ratio in \eqref{max_S_delta} is
\begin{align*}
S_P(\tilde{\mathbb{W}})=
\left(\frac{\tilde{\pmb\mu}\Sigma^{-1}\tilde{\pmb\mu}^T}{\mathbbm{1}\Sigma^{-1}\tilde{\pmb\mu}^T}+{\pmb \Delta}\tilde{\pmb \mu}^T\right)
\Bigg{/}
\left(\sqrt{\frac{\tilde{\pmb \mu}\Sigma^{-1}\tilde{\pmb \mu}^T}{\left(\mathbbm{1}\Sigma^{-1}\tilde{\pmb \mu}^T\right)^2}+\frac{2\tilde{\pmb \mu}{\pmb \Delta}^T}{\mathbbm{1}\Sigma^{-1}\tilde{\pmb \mu}^T}+{\pmb \Delta}\Sigma{\pmb \Delta}^T}\right).
\end{align*}
Thus, the desired statement follows if we can show
\begin{align}\label{nelyg}
S_P(\tilde{\mathbb{W}})<S_P({\mathbb{W}})
\end{align}
for all ${\pmb \Delta}=(\delta_1,\,\ldots,\,\delta_n)\in\mathbb{R}^n\setminus\{{\pmb 0}\}$. 

Inequality \eqref{nelyg} is equivalent to
\begin{align}\label{desired_ineq}
\frac{\tilde{\pmb \mu}\Sigma^{-1}\tilde{\pmb \mu}^T}{\mathbbm{1}\Sigma^{-1}\tilde{\pmb \mu}^T}+{\pmb \Delta}\tilde{\pmb \mu}^T<
\sqrt{\left(\frac{\tilde{\pmb \mu}\Sigma^{-1}\tilde{\pmb \mu}^T}{\mathbbm{1}\Sigma^{-1}\tilde{\pmb \mu}^T}+{\pmb \Delta}\tilde{\pmb \mu}^T\right)^2
-\left({\pmb \Delta}\tilde{\pmb \mu}^T\right)^2+{\pmb \Delta}\Sigma{\pmb \Delta}^T\cdot\tilde{\pmb\mu}\Sigma^{-1}\tilde{\pmb\mu}^T}.
\end{align}
Let us observe that the inequality 
\begin{align*}
x<\sqrt{x^2+\varepsilon},\,x\in\mathbb{R}\quad \Leftrightarrow \quad \frac{x}{|x|}<\sqrt{1+\frac{\varepsilon}{x^2}},\,x\in\mathbb{R}\setminus\{0\}
\end{align*}
is correct for all $\varepsilon>0$. Thus, inequality \eqref{desired_ineq} follows if we can show 
\begin{align}\label{desired_ineq_2}
\left({\pmb \Delta}\tilde{\pmb \mu}^T\right)^2<{\pmb \Delta}\Sigma{\pmb \Delta}^T\cdot\tilde{\pmb \mu}\Sigma^{-1}\tilde{\pmb\mu}^T.
\end{align}

Due to the positively defined covariance matrix, we can write
\begin{align*}
\left(\mathbbm{1}\Sigma^{-1}\tilde{\pmb \mu}^T\right)^2\tilde{\mathbb{W}}\Sigma\tilde{\mathbb{W}}^T
={\pmb \Delta}\Sigma{\pmb \Delta}^T\left(\mathbbm{1}\Sigma^{-1}\tilde{\pmb\mu}^T\right)^2+2{\pmb \Delta}\tilde{\pmb \mu}^T\left(\mathbbm{1}\Sigma^{-1}\tilde{\pmb\mu}^T\right)+\tilde{\pmb\mu}\Sigma^{-1}\tilde{\pmb\mu}^T>0.
\end{align*}
This implies \eqref{desired_ineq_2} because $ax^2+bx+c>0$ being valid for all $x\in\mathbb{R}$ implies $b^2-4ac<0$.
%we observe that the Hessian matrix can be computed by using the left-hand-side of \eqref{eq:before_Hess}. In other words, we compute 
%\begin{align*}
%\frac{\partial^2 L(\mathbb{W},\,\lambda)}{\partial w_i \partial w_j}
%\end{align*}
%for all $i,\,j=1,\,2,\,\ldots,\,n$ under the necessary conditions of %extrema  
%\begin{align*}
%\frac{\partial L(\mathbb{W},\,\lambda)}{\partial w_i }=0
%\end{align*}
%for all $i=1,\,2,\,\ldots,\,n$.

%Thus, if
%\begin{align*}
%L(\mathbb{W}):={\pmb \mu}^T\left(\mathbb{W}\Sigma\mathbb{W}^T\right)-%\left(\mathbb{W}{\pmb\mu}^T-r_f\right)\Sigma\mathbb{W}^T
%-r_f\mathbbm{1}^T\left(\mathbb{W}\Sigma\mathbb{W}^T\right),
%\end{align*}
%see \eqref{eq:before_Hess}, then by differentiating every element of %this column vector $L(\mathbb{W})$ with respect to %$w_1,\,w_2,\,\ldots,\,w_n$ and writing the obtained results to the %lines, we get
%\begin{align*}
%\frac{\partial L(\mathbb{W})}{\partial w_i }=
%2{\pmb \mu}^T\left(\mathbb{W}\Sigma\right)
%-2{\pmb \mu}^T\left(\mathbb{W}\Sigma\right)
%+r_f\Sigma
%-2r_f\mathbbm{1}\left(\mathbb{W}\Sigma\right)
%=-r_f \Sigma,
%\end{align*}
%for all $i=1,\,2,\,\ldots,\,n$.

%That yields the negative second differential:
%\begin{align*}
%d^2S_P(\mathbb{W})=-r_f(\Delta w_1,\,\Delta w_2,\,\ldots,\Delta %w_n)\Sigma(\Delta w_1,\,\Delta w_2,\,\ldots,\Delta w_n)^T<0,
%\end{align*}
%for all $(\Delta w_1,\,\Delta w_2,\,\ldots,\,\Delta %w_n)\in\mathbb{R}^n\setminus\{\pmb 0\}$. 
\end{proof}

\begin{proof}[Proof of Proposition \ref{prop:min_var_mu_const}.]
Let us set up the Lagrangian function
\begin{align*}
L(\mathbb{W},\,\lambda_1,\,\lambda_2)=\mathbb{W}\Sigma\mathbb{W}^T-\lambda_1\left({\pmb \mu}\mathbb{W}^T-\mu_0\right)-\lambda_2\left(\mathbbm{1}\mathbb{W}^T-1\right).
\end{align*}
By calculating its partial derivatives with respect to $w_1,\,w_2,\,\ldots,\,w_n$, $\lambda_1$ and $\lambda_2$, we get
\begin{align}\label{syst:ef_n}
\begin{cases}
&2\Sigma\mathbb{W}^T-\lambda_1{\pmb \mu}^T-\lambda_2\mathbbm{1}^T={\pmb 0} \\
&{\pmb \mu}\mathbb{W}^T=\mu_0 \\
&\mathbbm{1}\mathbb{W}^T=1
\end{cases}.
\end{align}
The first equation of \eqref{syst:ef_n} yields
\begin{align}\label{lygt:ef_n}
\mathbb{W}^T=\frac{\lambda_1}{2}\Sigma^{-1}{\pmb \mu}^T+\frac{\lambda_2}{2}\Sigma^{-1}\mathbbm{1}^T
=\left(\Sigma^{-1}{\pmb \mu}^T,\,\Sigma^{-1}\mathbbm{1}^T\right)
\begin{pmatrix}
\lambda_1/2\\\lambda_2/2
\end{pmatrix}.
\end{align}

We then multiply the both sides of \eqref{lygt:ef_n} by ${\pmb\mu}$ and $\mathbbm{1}$ respectively and, due to the last two equations in \eqref{syst:ef_n}, obtain
\begin{align*}
\begin{cases}
\frac{\lambda_1}{2}{\pmb \mu}\Sigma^{-1}{\pmb \mu}^T+\frac{\lambda_2}{2}{\pmb \mu}\Sigma^{-1}\mathbbm{1}^T=\mu_0\\
\frac{\lambda_1}{2}\mathbbm{1}\Sigma^{-1}{\pmb\mu}^T+\frac{\lambda_2}{2}\mathbbm{1}\Sigma^{-1}\mathbbm{1}^T=1
\end{cases}
\Leftrightarrow
\begin{pmatrix}
{\pmb\mu}\Sigma^{-1}{\pmb\mu}^T&{\pmb\mu}\Sigma^{-1}\mathbbm{1}^T\\
\mathbbm{1}\Sigma^{-1}{\pmb\mu}^T&\mathbbm{1}\Sigma^{-1}\mathbbm{1}^T
\end{pmatrix}
\begin{pmatrix}
\frac{\lambda_1}{2}\\\frac{\lambda_2}{2}
\end{pmatrix}
=
\begin{pmatrix}
\mu_0\\
1
\end{pmatrix}.
\end{align*}

Thus,
\begin{align*}
\begin{pmatrix}
\lambda_1/2\\
\lambda_2/2
\end{pmatrix}
=
\begin{pmatrix}
{\pmb\mu}\Sigma^{-1}{\pmb\mu}^T&{\pmb\mu}\Sigma^{-1}\mathbbm{1}^T\\
\mathbbm{1}\Sigma^{-1}{\pmb\mu}^T&\mathbbm{1}\Sigma^{-1}\mathbbm{1}^T
\end{pmatrix}^{-1}
\begin{pmatrix}
\mu_{0}\\
1
\end{pmatrix}
\end{align*}
and
\begin{align*}
\mathbb{W}^T=\left(\Sigma^{-1}{\pmb\mu}^T,\,\Sigma^{-1}\mathbbm{1}^T\right)
\begin{pmatrix}
{\pmb\mu}\Sigma^{-1}{\pmb\mu}^T&{\pmb\mu}\Sigma^{-1}\mathbbm{1}^T\\
\mathbbm{1}\Sigma^{-1}{\pmb\mu}^T&\mathbbm{1}\Sigma^{-1}\mathbbm{1}^T
\end{pmatrix}^{-1}
\begin{pmatrix}
\mu_{0}\\
1
\end{pmatrix}.
\end{align*}

Let us observe that the matrix
\begin{align*}
\begin{pmatrix}
{\pmb\mu}\Sigma^{-1}{\pmb\mu}^T&{\pmb\mu}\Sigma^{-1}\mathbbm{1}^T\\
\mathbbm{1}\Sigma^{-1}{\pmb\mu}^T&\mathbbm{1}\Sigma^{-1}\mathbbm{1}^T
\end{pmatrix}
\end{align*}
is never singular as its determinant 
\begin{align*}
\left({\pmb\mu}\Sigma^{-1}{\pmb\mu}^T\right)\left(\mathbbm{1}\Sigma^{-1}\mathbbm{1}^T\right)-\left({\pmb\mu}\Sigma^{-1}\mathbbm{1}^T\right)^2>0,\,{\pmb \mu}\neq{\pmb 0}
\end{align*}
due to the Cauchy–Schwarz inequality.

Argumentation that the attained point is the minimum, is analogous to the one given in the proof of Proposition \ref{prop:var}.
\end{proof}

\begin{proof}[Proof of Proposition \ref{prop:EF_risk}.]
It is easy to compute that
\begin{align*}
\begin{pmatrix}
{\pmb\mu}\Sigma^{-1}{\pmb\mu}^T&{\pmb\mu}\Sigma^{-1}\mathbbm{1}^T\\
\mathbbm{1}\Sigma^{-1}{\pmb\mu}^T&\mathbbm{1}\Sigma^{-1}\mathbbm{1}^T
\end{pmatrix}^{-1}
=\frac{1}{d}
\begin{pmatrix}
\mathbbm{1}\Sigma^{-1}\mathbbm{1}^T&-{\pmb\mu}\Sigma^{-1}\mathbbm{1}^T\\
-{\pmb\mu}\Sigma^{-1}\mathbbm{1}^T&
{\pmb\mu}\Sigma^{-1}{\pmb\mu}^T
\end{pmatrix},
\end{align*}
where $d=\left({\pmb\mu}\Sigma^{-1}{\pmb\mu}^T\right)\left(\mathbbm{1}\Sigma^{-1}\mathbbm{1}^T\right)-\left({\pmb\mu}\Sigma^{-1}\mathbbm{1}^T\right)^2>0$ if ${\pmb \mu}\neq{\pmb 0}$. Then, by inserting $\mathbb{W}$ from Proposition \ref{prop:min_var_mu_const} into $\sigma^2_P=\mathbb{W}\Sigma\mathbb{W}^T$, we obtain
\begin{align*}
\sigma^2_P&=\frac{1}{d^2}
\left(\mu_0,\,1\right)
\begin{pmatrix}
\mathbbm{1}\Sigma^{-1}\mathbbm{1}^T&-{\pmb\mu}\Sigma^{-1}\mathbbm{1}^T\\
-{\pmb\mu}\Sigma^{-1}\mathbbm{1}^T&
{\pmb\mu}\Sigma^{-1}{\pmb\mu}^T
\end{pmatrix}
\begin{pmatrix}
\Sigma^{-1}{\pmb\mu}^T,\,
\Sigma^{-1}{\mathbbm{1}}^T
\end{pmatrix}^T
\left({\pmb\mu}^T,\,\mathbbm{1}^T\right)
\begin{pmatrix}
\mathbbm{1}\Sigma^{-1}\mathbbm{1}^T&-{\pmb\mu}\Sigma^{-1}\mathbbm{1}^T\\
-{\pmb\mu}\Sigma^{-1}\mathbbm{1}^T&
{\pmb\mu}\Sigma^{-1}{\pmb\mu}^T
\end{pmatrix}
\begin{pmatrix}
\mu_{0}\\
1
\end{pmatrix}\\
&=\frac{1}{d}\left(\mu_0,\,1\right)
\begin{pmatrix}
\mathbbm{1}\Sigma^{-1}\mathbbm{1}^T&-{\pmb\mu}\Sigma^{-1}\mathbbm{1}^T\\
-{\pmb\mu}\Sigma^{-1}\mathbbm{1}^T&
{\pmb\mu}\Sigma^{-1}{\pmb\mu}^T
\end{pmatrix}
\begin{pmatrix}
\mu_{0}\\
1
\end{pmatrix}\\
&=\frac{1}{d}\left(\left(\mathbbm{1}\Sigma^{-1}\mathbbm{1}^T\right)^2\mu_0^2-2\left({\pmb\mu}\Sigma^{-1}\mathbbm{1}^T\right)\mu_0+{\pmb\mu}\Sigma^{-1}{\pmb\mu}^T\right).
\end{align*}
\end{proof}

\begin{proof}[Proof of Proposition \ref{prop:EF_with_rf}.]
Let $P=w_1r_1+\ldots+w_nr_n+w_fr_f$ be the random variable, where $w_1+w_2+\ldots+w_n+w_f=1$, $(w_1,\,w_2,\,\ldots,\,w_n,\,w_f)\in\mathbb{R}^{n+1}$, the real-valued random variables $r_1,\,r_2,\,\ldots,\,r_n$ have positive variances and represent the returns "paid" by the financial firms $A_1,\,A_2,\,\ldots,\,A_n$; $r_f=const.$ denotes the risk-free return rate of asset $F$, and $w_f$ denotes the proportion invested to $F$. Then
\begin{align*}
\mu_P&=w_1\mu_{1}+w_2\mu_2+\ldots+w_n\mu_{n}+w_f\,r_f\\
&=\pmb{\mu}\mathbb{W}^T+w_f\,r_f\\
&=\pmb{\mu}\mathbb{W}^T+r_f(1-\mathbbm{1}\mathbb{W}^T)
\end{align*}
or
\begin{align*}
\mu_P-r_f=(\pmb{\mu}-r_f\mathbbm{1})\mathbb{W}^T
\Rightarrow
\tilde{\mu}_P=\tilde{\pmb{\mu}}\mathbb{W}^T,
\end{align*}
where $\tilde{\mu}_P=\mu_P-r_f$ and $\tilde{\pmb{\mu}}=\pmb{\mu}-r_f\mathbbm{1}$. It is evident that the variance of the random variable $P=w_1r_1+\ldots+w_nr_n+w_fr_f$ is $\sigma_P^2=\mathbb{W}\Sigma\mathbb{W}^T$, $\mathbb{W}\in\mathbb{R}^n$. Thus, we shall minimize the variance $\sigma_P^2=\mathbb{W}\Sigma\mathbb{W}^T$ under constraint $\tilde{\pmb{\mu}}\mathbb{W}^T=\mu_0-r_f=\tilde{\mu}_0=conts$. Let us set up the Lagrangian function
\begin{align*}
L(\mathbb{W},\,\lambda)=\mathbb{W}\Sigma\mathbb{W}^T-\lambda(\tilde{\pmb{\mu}}\mathbb{W}^T-\tilde{\mu}_0),
\end{align*}
whose partial derivatives with respect to $w_1,\,w_2,\,\ldots,\,w_n,\,w_f$ and $\lambda$ implies the system of linear equations
\begin{align*}
\begin{cases}
2\Sigma \mathbb{W}^T-\lambda\tilde{\pmb{\mu}}^T={\pmb 0}^T\\
\tilde{\pmb{\mu}}\mathbb{W}^T-\tilde{\mu}_0=0
\end{cases},
\end{align*}
and its solution is
\begin{align}\label{weights_with_rf}
\mathbb{W}^T=(\mu_0-r_f)\frac{\Sigma^{-1}\tilde{\pmb{\mu}}^T}{\tilde{\pmb{\mu}}\Sigma^{-1}\tilde{\pmb{\mu}}^T},\,\tilde{\pmb \mu}\neq{\pmb 0}.
\end{align}
Justification that the attained point is minimum, is implied due to the Hessian matrix of $L(\mathbb{W},\,\lambda)$ being $2\Sigma$.
Then
\begin{align*}
\sigma^2_P=\mathbb{W}\Sigma\mathbb{W}^T=(\mu_0-r_f)^2\frac{\tilde{\pmb{\mu}}^T\Sigma^{-1}}{\tilde{\pmb{\mu}}\Sigma^{-1}\tilde{\pmb{\mu}}^T}
\Sigma
\frac{\Sigma^{-1}\tilde{\pmb{\mu}}^T}{\tilde{\pmb{\mu}}\Sigma^{-1}\tilde{\pmb{\mu}}^T}
=\frac{(\mu_0-r_f)^2}{\tilde{\pmb{\mu}}\Sigma^{-1}\tilde{\pmb{\mu}}^T},
\,\tilde{\pmb \mu}\neq{\pmb 0}
\end{align*}
or
\begin{align*}
\mu-r_f=\pm\sqrt{\tilde{\pmb{\mu}}\Sigma^{-1}\tilde{\pmb{\mu}}^T}\,\sigma,\,\sigma\geqslant0.
\end{align*}
\end{proof}

\begin{proof}[Proof of Proposition \ref{prop:Tangent}.]
We shall search for the common points of the efficient frontiers given in Propositions \ref{prop:EF_risk} and \ref{prop:EF_with_rf}. In other words, we shall require 
$\mathbbm{1}\mathbb{W}^T=1$, in $w_1+w_2+\ldots+w_n+w_f=1$ which means that we set $w_f=0$. Then, if $\mathbbm{1}\mathbb{W}^T=1$, the equality \eqref{weights_with_rf} implies
\begin{align*}
(\mu_0-r_f)=\frac{\tilde{\pmb{\mu}}\Sigma^{-1}\tilde{\pmb{\mu}}^T}{\mathbbm{1}\Sigma^{-1}\tilde{\pmb{\mu}}^T},\,\tilde{\pmb \mu}\neq{\pmb 0}.
\end{align*}
By inserting this back to \eqref{weights_with_rf} we obtain
\begin{align*}
\mathbb{W}^T=\frac{\Sigma^{-1}\tilde{\pmb{\mu}}^T}{\mathbbm{1}\Sigma^{-1}\tilde{\pmb{\mu}}^T},\,\tilde{\pmb \mu}\neq{\pmb 0}
\end{align*}
and that is the optimal portfolio $M$, see Proposition \ref{prop:Sharpe}.
It remains to set up the line in $(\sigma,\,\mu)$ plane that intersects the points
$(0,\,r_f)$ and $(\sigma_M,\,\mu_M)$.
\end{proof}

\begin{proof}[Proof of Proposition \ref{prop:Separation}.]
Let $\mathbb{W}_1,\,\mathbb{W}_2,\,\ldots,\,\mathbb{W}_m$ denote portfolios that consist of differently weighted risky investments to the same financial firms $A_1$, $A_2$, $\ldots$, $A_n$. Let $\mu_{0,\,1},\,\mu_{0,\,2},\,\ldots,\,\mu_{0,\,m}$ be the returns of the portfolios $\mathbb{W}_1,\,\mathbb{W}_2,\,\ldots,\,\mathbb{W}_m$ respectively. Since $\mathbb{W}_1,\,\mathbb{W}_2,\,\ldots,\,\mathbb{W}_m$ are efficient, they all satisfy \eqref{weights_ef_no_rf} and $\mu_{0,\,i}\geqslant \mu_{\sigma \min}={\pmb \mu}\Sigma^{-1}\mathbbm{1}^T/\mathbbm{1}\Sigma^{-1}\mathbbm{1}^T$ for all $i=1,\,2,\,\ldots,\,m$, see \eqref{eq:EF_risky_only}. Then
\begin{align*}
&w_1\mathbb{W}_1^T+w_2\mathbb{W}_2^T+\ldots+w_m\mathbb{W}_m^T\\
&=\left(\Sigma^{-1}{\pmb \mu}^T,\,\Sigma^{-1}\mathbbm{1}^T\right)
\begin{pmatrix}
{\pmb\mu}\Sigma^{-1}{\pmb\mu}^T&{\pmb\mu}\Sigma^{-1}\mathbbm{1}^T\\
\mathbbm{1}\Sigma^{-1}{\pmb\mu}^T&\mathbbm{1}\Sigma^{-1}\mathbbm{1}^T
\end{pmatrix}^{-1}
\begin{pmatrix}
w_1\mu_{0,\,1}\\
w_1
\end{pmatrix}\\
&+
\left(\Sigma^{-1}{\pmb \mu}^T,\,\Sigma^{-1}\mathbbm{1}^T\right)
\begin{pmatrix}
{\pmb\mu}\Sigma^{-1}{\pmb\mu}^T&{\pmb\mu}\Sigma^{-1}\mathbbm{1}^T\\
\mathbbm{1}\Sigma^{-1}{\pmb\mu}^T&\mathbbm{1}\Sigma^{-1}\mathbbm{1}^T
\end{pmatrix}^{-1}
\begin{pmatrix}
w_2\mu_{0,\,2}\\
w_2
\end{pmatrix}\\
&\,\,\,\vdots\\
&+
\left(\Sigma^{-1}{\pmb \mu}^T,\,\Sigma^{-1}\mathbbm{1}^T\right)
\begin{pmatrix}
{\pmb\mu}\Sigma^{-1}{\pmb\mu}^T&{\pmb\mu}\Sigma^{-1}\mathbbm{1}^T\\
\mathbbm{1}\Sigma^{-1}{\pmb\mu}^T&\mathbbm{1}\Sigma^{-1}\mathbbm{1}^T
\end{pmatrix}^{-1}
\begin{pmatrix}
w_m\mu_{0,\,m}\\
w_m
\end{pmatrix}\\
&=\left(\Sigma^{-1}{\pmb \mu}^T,\,\Sigma^{-1}\mathbbm{1}^T\right)
\begin{pmatrix}
{\pmb\mu}\Sigma^{-1}{\pmb\mu}^T&{\pmb\mu}\Sigma^{-1}\mathbbm{1}^T\\
\mathbbm{1}\Sigma^{-1}{\pmb\mu}^T&\mathbbm{1}\Sigma^{-1}\mathbbm{1}^T
\end{pmatrix}^{-1}
\begin{pmatrix}
\tilde{\mu}_{0}\\
1
\end{pmatrix},
\end{align*}
where $\tilde{\mu}_0=w_1\mu_{0,\,1}+w_2\mu_{0,\,2}+\ldots+w_m\mu_{0,m}\geqslant \mu_{\sigma \min}$.

The proof is analogous because of \eqref{weights_with_rf} in case there is risky-less investment added next to financial firms $A_1,\,A_2,\,\ldots,\,A_n$ whose random returns $r_1,\,r_2,\,\ldots,\,r_n$ have positive variances. 

\end{proof}

\section{Example} In this section we choose some hypothetical data in $(\sigma,\,\mu)$ plane which represent the standard deviation and expectation of the random returns of certain investments and visualize the propositions from Section \ref{sec:Propop}. All the necessary computations and visualizations are performed using the program \cite{Matlab}.

Let us suppose that the random returns $r_1,\,r_2,\,\ldots,\,r_8$ of certain investments are described by their expectations and covariance matrix:
\begin{align*}
{\pmb \mu}&=( 0.0620,\,0.0660,\,0.0838,\,0.0849,\,0.0674,\,0.0949,\,0.6780,\,0.0691),\\
\Sigma&=
{\small\begin{pmatrix*}[r]
0.0099&0.0105&0.0124&0.0002&0.0023&0.0020&-0.0050&-0.0032 \\
0.0105&0.0114&0.0142&-0.0004&0.0024&0.0014&-0.0058&-0.0039  \\
0.0124&0.0142&0.0201&-0.0019&0.0026&0.0000&-0.0079&-0.0057 \\
0.0002&-0.0004&-0.0019&0.0207&0.0136&0.0216&0.0217&0.0197 \\
0.0023&0.0024&0.0026&0.0136&0.0353&0.0125&0.0109&0.0108 \\
0.0020&0.0014&0.0000&0.0216&0.0125&0.0239&0.0203&0.0188 \\
-0.0050&-0.0058&-0.0079&0.0217&0.0109&0.0203&0.0312&0.0262 \\
-0.0032&-0.0039&-0.0057&0.0197&0.0108&0.0188&0.0262&0.0229
\end{pmatrix*}}.
\end{align*}
and say that the risk-free investment rate $r_f=0.015$. Then, according to Proposition \ref{prop:var}, the absolute minimum variance portfolio (denoted by "$\sigma \min$" in subscription), its standard deviation and expectation are:
\begin{align*}
\mathbb{W}_{\sigma \min}&:=(0.4343,\,0.7324,\,-0.4033,\,0.5122,\,-0.0019,\,-0.6344,\,0.0394,\,0.3213),\\
(\sigma_{\sigma \min},\,\mu_{\sigma \min})&:=(0.0677,\,0.0495).
\end{align*}
According to Proposition \ref{prop:Sharpe}, the optimal portfolio, i.e. the portfolio of the maximal Sharpe ratio, (denoted by "$M$" in subscription), its standard deviation and expectation are:
\begin{align*}
\mathbb{W}_{M}&:=(1.2007,\,-1.5916,\,0.8996,\,0.5272,\,-0.0389,\,-0.0715,\,-0.1321,\,0.2066),\\
(\sigma_{M},\,\mu_{M})&:=(0.0966,\,0.0854).
\end{align*}
According to Proposition \ref{prop:EF_risk}, we get the efficient frontier curve in $(\sigma,\,\mu)$ plane
\begin{align}\label{EF_ex_no_F}
\sigma=\sqrt{3.7017\mu^2-0.3667\mu+0.0137},\,\mu\geqslant 0.0495
\end{align}
where 
\begin{align*}
\sigma=\sqrt{3.7017\mu^2-0.3667\mu+ 0.0137},\,\mu\in\mathbb{R}
\end{align*}
represents the minimum variance frontier.

According to Proposition \ref{prop:EF_with_rf}, the efficient frontier of the portfolio, including the risk-free investment, is
\begin{align}\label{EF_ex_with_F}
\mu= 0.7283\sigma+0.015,\,\sigma\geqslant0.
\end{align}
Because of Proposition \ref{prop:Tangent}, the semi-line \eqref{EF_ex_with_F} is also the tangent line of the efficient frontier \eqref{EF_ex_no_F} at the optimal portfolio $(\sigma_M,\,\mu_M)$. According to the proof of Proposition \ref{prop:EF_with_rf}, the semi-lines
\begin{align*}
\mu=\pm 0.7283\sigma+0.015,\,\sigma\geqslant0.
\end{align*}
represent the minimal variance frontier when the portfolio includes the risk-free investment. 

According to Proposition \ref{prop:min_var_mu_const}, we may choose the portfolios
\begin{align*}
&\mathbb{W}_1=(1.3834,\,-2.1453,\,1.2100,\,0.5308,\,-0.0477,\,0.0626,\,-0.1730,\,0.1792)\\
&\mathbb{W}_2=(1.9352,\,-3.8185,\,2.1480,\,0.5417,\,-0.0744,\,0.4679,\,-0.2964,\,0.0966)\\
&\mathbb{W}_3=(2,4379,\,-5.3426,\,3.0024,\,0.5515,\,-0.0987,\,0.8370,\,-0.4089,\,0.0213)
\end{align*}
whose standard deviations and expectations belong to the efficient frontier \eqref{EF_ex_no_F}. Then, by choosing $\tilde{\mu}_0=\mu_M=0.0854$ in system \eqref{syst}, we conclude that, for instance, the linear combination $0.8479\mathbb{W}_1+0.6809\mathbb{W}_2-0.5287\mathbb{W}_3$ also represents the portfolio lying on the efficient frontier: the optimal one in this particular case. If we choose $\tilde{\mu}_0=0.1319$, then the portfolio $0.0131\mathbb{W}_1+0.4545\mathbb{W}_2+0.5324\mathbb{W}_3$ is also efficient; its coordinates on $(\sigma,\,\mu)$ plane are $(0.1723,\,0.1319)$.

We depict all the listed thoughts of this section in the following Figure \ref{PicMarkowitz}.

\begin{figure}[H]
\centering
\caption{Visual of Markowitz's portfolio selection theory}
\includegraphics[width=160mm,height=130mm]{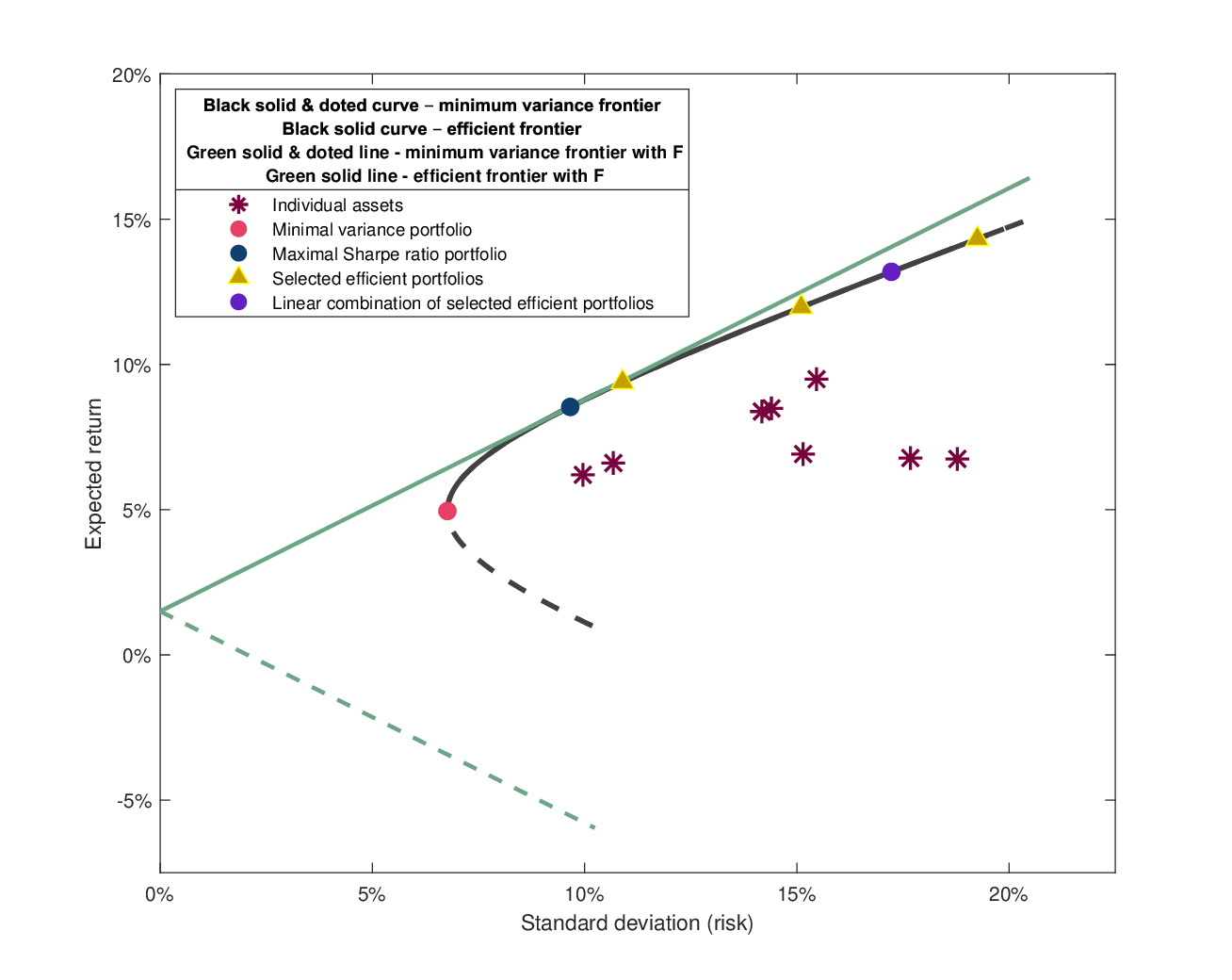}
\label{PicMarkowitz}
\end{figure}

\bibliography{references}
% \emptypage
\end{document}